\documentclass[conference]{IEEEtran}

\usepackage{balance}
\usepackage{booktabs} % For formal tables
\usepackage{algorithmic}
\usepackage{graphicx}
\usepackage{textcomp}
\usepackage[most]{tcolorbox}
\usepackage{xcolor}
\usepackage[ruled]{algorithm2e} % For algorithms
\usepackage{comment}
\usepackage{amsfonts}
\usepackage{amsmath}
\usepackage{amsthm}
\usepackage{tabularx}
\usepackage{multirow}
\usepackage{paralist}
\usepackage{listings}
\usepackage{caption}
\usepackage{colortbl}
\usepackage{tikz}
\usetikzlibrary{arrows.meta,positioning}

\usepackage{subcaption}

\newcolumntype{Y}{>{\centering\arraybackslash}X}

\newcommand*\darkcircled[1]{%
  \tikz[baseline=(char.base)]%
    \node[shape=circle, draw=black, fill=black, text=white, inner sep=.8pt] (char) {#1};}

\usetikzlibrary{calc}
\newcommand{\tikzmark}[1]{%
  \tikz[remember picture,overlay]\node[inner sep=0pt,outer sep=0pt](#1){};}
    
\newtheorem{defn}{Definition} % definition numbers are dependent on theorem numbers

\newtheorem{theorem}{Theorem}

\newcommand{\frmtitle}{Safety Graph (SG)}

\newcommand{\frm}{SG}

\newcommand{\pratt}[2]    {\ensuremath{\mathit{pre}^{#2}({#1})}}
\newcommand{\postp}[2]    {\ensuremath{\mathit{post}^{+#2}({#1})}}
\newcommand{\postn}[2]    {\ensuremath{\mathit{post}^{-#2}({#1})}}

\newcommand{\src}    {\ensuremath{\mathit{src}}}
\newcommand{\dst}    {\ensuremath{\mathit{dest}}}
\definecolor{pddlkeyword}{RGB}{128,0,96}
\definecolor{pddlcomment}{RGB}{110,110,110}

\lstdefinelanguage{PDDL}{
    alsoletter={:,-},
    morekeywords={:action,:parameters,:precondition,:effect},
    sensitive=true,
    morecomment=[l]{;}
}

\lstdefinelanguage{PDDLPlan}{
    alsoletter={-},
    sensitive=true,
    morecomment=[l]{;}
}

\lstdefinestyle{pddlplanstyle}{
    language=PDDLPlan,
    basicstyle=\ttfamily\tiny,
    commentstyle=\color{orange}\itshape,
    showstringspaces=false,
    breaklines=true,
    columns=fullflexible,
    keepspaces=true,
    tabsize=2,
    frame=none,
    escapeinside={(*@}{@*)}
}

\lstdefinestyle{pddlstyle}{
    language=PDDL,
    basicstyle=\ttfamily\tiny,
    keywordstyle=\color{pddlkeyword}\bfseries,
    commentstyle=\color{pddlcomment}\itshape,
    showstringspaces=false,
    breaklines=true,
    columns=fullflexible,
    keepspaces=true,
    tabsize=2,
    frame=none
}

\lstdefinelanguage{PDDL}{
  morekeywords={define, domain, problem, requirements, strips, typing,
                negative-preconditions, conditional-effects,
                types, predicates, action, parameters, precondition,
                effect, when},
  sensitive=false,
  morecomment=[l]{;},
  morestring=[b]"
}
 
\lstdefinestyle{valoutputstyle}{
    basicstyle=\ttfamily\scriptsize,
    numbers=none,
    breaklines=false,
    escapeinside={(*@}{@*)},
    columns=fullflexible,
    keepspaces=true,
    aboveskip=0pt,
    belowskip=0pt,
    xleftmargin=0pt,
    frame=none
}

\ifCLASSINFOpdf
\else
\fi
\begin{document}
%
% paper title
% Titles are generally capitalized except for words such as a, an, and, as,
% at, but, by, for, in, nor, of, on, or, the, to and up, which are usually
% not capitalized unless they are the first or last word of the title.
% Linebreaks \\ can be used within to get better formatting as desired.
% Do not put math or special symbols in the title.
\title{Safety Graph: When Safety Responses Become Attack Surfaces in Cyber-Physical Systems}

% author names and affiliations
% use a multiple column layout for up to three different
% affiliations
% \author{\IEEEauthorblockN{Rakesh Podder$^{\dagger}$$^*$, Philip D. Hopkins$^{\dagger}$, Shadaab Kawnain Bashir$^{\dagger}$, Sarath Sreedharan$^{\dagger}$, Indrajit Ray$^{\dagger}$$^*$, \\ and Indrakshi Ray$^{\dagger}$}
% 	\IEEEauthorblockA{$^{\dagger}$Colorado State Uinersity, Fort Collins, Colorado, 80521, USA\\
% 		\{Rakesh.Podder, Philip.Hopkins, Shadaab.Bashir, Sarath.Sreedharan, Indrajit.Ray, Indrakshi.Ray\}@colostate.edu \\ $^*$ Corresponding Authors}
%             }
	% \and
	% \IEEEauthorblockN{Homer Simpson}
	% \IEEEauthorblockA{Twentieth Century Fox\\
	% 	homer@thesimpsons.com}
	% \and
	% \IEEEauthorblockN{James Kirk\\ and Montgomery Scott}
	% \IEEEauthorblockA{Starfleet Academy\\
	% 	someemail@somedomain.com}

\author{
\IEEEauthorblockN{Rakesh Podder, Philip D. Hopkins, Sarath Sreedharan, Indrakshi Ray, and Indrajit Ray}
\IEEEauthorblockA{Computer Science Department, Colorado State University, Fort Collins, Colorado, USA \\ 
\{Rakesh.Podder, Philip.Hopkins, Sarath.Sreedharan,  Indrakshi.Ray, Indrajit.Ray\}@colostate.edu}
\\ 
}

\maketitle

% As a general rule, do not put math, special symbols or citations
% in the abstract
\begin{abstract}

Modern industrial control systems (ICS) are susceptible to cyber attacks triggering safety failures in the operational technology (OT) layer. The traditional way to mitigate any failure in  OT is to implement safety responses.
However, we show in this work that these responses can open up new attack surfaces, thereby vastly expanding the scope of lateral movement.
Existing frameworks model cyber-to-physical escalation but provide no mechanism to reason about how these safety responses to OT-failures can trigger new cascading cyber attacks in the Information Technology (IT) layer. To enable such reasoning, we introduce \textit{Safety Graph} (SG), a formal framework unifying three paradigms -- Attack Graphs (AG), Fault Trees (FT), and Event Trees (ET) -- over a shared propositional state space, capturing the full bidirectional IT-OT cascade. We provide a sound and complete compilation of SG to a classical AI planning problem, performing plan search using A* to find cross-layer cascade paths.
Through a Web3D modeling and simulation of the Tennessee Eastman chemical process plant, which is a widely used chemical engineering benchmark serving as a standard for testing control, monitoring and fault propagation,
we demonstrate that the planning compilation of SG automatically discovers \textit{safety-response-induced} lateral movements, and identifies the responsible barriers that are unidentifiable under existing models.
A complexity and scalability analysis characterizes how the planning compilation of SG scales with model size and cascade depth.
To the best of our knowledge, SG is the first framework to formally model IT-OT attacks and enable operators to explore the problem of cyber attacks triggering cascading failures in ICS.
\end{abstract}

\begin{IEEEkeywords}
Cyber-Physical System, Attack Graph, Fault Tree, Event Tree, AI Planning
\end{IEEEkeywords}
% no keywords

% For peer review papers, you can put extra information on the cover
% page as needed:
% \ifCLASSOPTIONpeerreview
% \begin{center} \bfseries EDICS Category: 3-BBND \end{center}
% \fi
%
% For peerreview papers, this IEEEtran command inserts a page break and
% creates the second title. It will be ignored for other modes.
\IEEEpeerreviewmaketitle

% -----------------------
% ------------------------
\section{Introduction}
Modern industrial infrastructure depends on the tight coupling of Information Technology (IT) and Operational Technology (OT). IT systems handle enterprise communication, data management, and business logic. OT systems — Programmable Logic Controllers (PLCs), Human-Machine Interfaces (HMIs), and SCADA servers — control physical processes in power grids, water treatment plants, chemical refineries and pipelines. %Together, these form Cyber-Physical Systems (CPS), where digital decisions produce physical consequences, and physical faults reshape the attack surface.

This coupling creates security challenges %that goes beyond conventional network defense 
as IT and OT teams operate in silos (NIST IR 8576 Initial Public Draft \cite{tang2026transit}). A successful intrusion in a CPS does not merely expose data — it can manipulate pressure setpoints, disable safety valves, or corrupt control logic, with consequences that are physical, potentially irreversible, and catastrophic. Major CPS attacks such as the 2010 Stuxnet attack \cite{baezner2017stuxnet}, the 2016 Ukraine power grid \cite{case2016analysis} or the 2021 Colonial Pipeline attack \cite{beerman2023review}, all share a consistent pattern where attackers enter through IT and cascade downward into OT.

%propagated through IT infrastructure and reprogrammed Siemens PLCs to destroy uranium centrifuges while reporting normal operation to operators. The 2016 Ukraine power grid attack used spear-phishing to compromise IT workstations before pivoting into OT to cut power to over 200,000 households \cite{case2016analysis}. The 2021 Colonial Pipeline ransomware attack disabled fuel supply across the US East Coast, originating entirely in the IT . These incidents 

The research community has responded to solve these challenges. A substantial body of work models how cyber attacks propagate from IT to OT — using attack graphs \cite{barrere2023cyber,hawrylak2012using,ibrahim2019attack}, attack trees \cite{lemaire2017security,xie2013security}, resiliency graph \cite{bashir2024resiliency}, state dependency graphs \cite{adamos2024enhancing}, and hybrid frameworks that link cyber and physical layers \cite{semertzis2022quantitative,sahu2023inferring}. These models capture the cyber incidents and their effects on a physical failure. %This direction is well-studied.
However, the mitigation solutions of physical failures do not simply terminate the threat. They trigger safety responses — alarms, operator procedures, diagnostic checks, maintenance modes — that alter the system's operational state. As we observe in this work, each such response is a potential new cyber exposure. A software vulnerability that is unexploitable under normal operating conditions may become exploitable after a network reconfiguration is performed during fault recovery. %A workstation that is architecturally unreachable may be brought online by a technician diagnosing a PLC failure. 
%In other words, the physical fault, and the safety procedure it initiates, creates new attack surfaces in the IT layer that are not considered before.

This OT-to-IT jump, where a physical fault enables a new cyber attack, is precisely what existing models fail to capture and enable analysis. 
We identify this as a new class of lateral movement that we call \textit{safety-response-induced lateral} movement and is a subtle exploitable weakness in ICS environments. %Conventional lateral movement exploits network reachability and vulnerabilities. 
Safety-response-induced lateral movement exploits transient operational state changes — \textit{diagnostic connectivity},\textit{ maintenance mode activation}, \textit{temporary host exposure} — that occur as a direct consequence of prescribed safety procedures. %An attacker who can predict or induce these state changes gains access to hosts that are unreachable under normal operating conditions, without exploiting any additional vulnerability.
Current frameworks model cyber-to-physical triggering in detail, but treat safety responses as terminal events rather than as intermediate states that reshape the cyber threat landscape. Even the most expressive prior frameworks handle this asymmetrically, as cyber behavior is modeled explicitly, while physical and safety-response consequences are abstracted into discrete end states.
IT professionals have used Attack Graphs (AG) to analyze system vulnerabilities and plan cybersecurity defenses \cite{lallie2020review}. ICS operators have separately used Fault Trees (FT) to reason about the propagation of failure \cite{ruijters2015fault}, and Event Trees (ET) to model time-ordered safety responses that follow a fault \cite{ericson2015hazard}. No prior framework captures the full bidirectional cascade — IT attacks triggering OT faults (IT $\to$ OT) and safety responses creating new IT attack surfaces (OT $\to$ IT).

To address this gap, we introduce \frmtitle, a formal modeling framework that integrates AG, FT, and ET into a unified propositional state space. Cross-domain dependencies are encoded through a shared attribute set over all hosts where an attack's postconditions can enable a fault transition, a fault's occurrence can trigger an event, and event barriers (safety responses) outcome can establish new host-level cyber preconditions that enable subsequent attacks. This captures the full bidirectional cascade without requiring separate linking relations between layers.
However, modeling alone is not sufficient for practical security analysis. A defender operating on a real IT-OT deployment cannot manually enumerate all feasible cross-layer paths — the space is too large, and the interactions between cyber, physical, and safety-response layers are too subtle. We address this through a sound compilation of \frm\ into classical AI planning. Every attack step, fault transition, and event-tree barrier evaluation become a planning action. Every system state proposition becomes a planning fluent. The question of whether a critical failure is reachable under a given IT-OT configuration is reduced to a planning problem, which an AI planner can solve automatically. %AI planning is known to scale to very large problem spaces \cite{podder2025spear,bashir2024resiliency} — this is a key advantage of our approach over manual graph inspection or exhaustive model checking.

To evaluate \frm\ in practice, we build IT-OT test-benches modeled on the Tennessee Eastman chemical process plant \cite{chiang2001tennessee}, with a real network architecture implemented on virtual machines (VMs). We instantiate an AG from network configuration, cyber threat intelligence (CTI) reports, and penetration-testing evidence, construct an FT with input from domain experts, and derive ETs from safety manuals. The AT, FT and ET are the inputs to our modeling process. All three are automatically compiled into a planning model (encoded in Planning Domain Definition Language (PDDL)) using the compilation defined in $\S$\ref{sec:SG}. %The AG captures the reachable cyber state space over hosts; the FT captures fault escalation conditions and their enabling propositions; the ET captures time-ordered safety barriers whose outcomes introduce new host-level propositions into the shared state space. 
The AI planner operates over this unified model to discover feasible cross-layer cascades. %When analyzed using only the AG and FT layers, the planner finds a valid plan triggering an initial fault. However, critical faults requiring cyber preconditions absent from the reachable state space remain undiscovered. 
%Once the ET is integrated to form \frm\, it gains barrier-evaluation actions whose postconditions introduce new attack actions, extending the reachable state space across all three layers. 
The planner discovers a path traversing AG $\to$ FT $\to$ ET $\to$ AG $\to$ FT, establishing that the critical failure is reachable under \frm. %, but provably absent from the AG+FT-only model. 
We validate every step of the planner-generated trace in the live test-bench. 
In this work, we make the following major contributions:

\begin{enumerate}[i)]
\item We develop a uniform framework (namely, safety graph, SG) that allows one to model security, faults, and safety, and their interdependencies in a cyber-physical system. To the best of our knowledge, ours is the first effort in this direction.
\item We compile SG to a  classical planning problem and provide theoretical results proving soundness, completeness, and other theoretical guarantees.
\item We demonstrate how to leverage AI planning for reasoning about attacks, faults, and safety events in SGs.
\item We also identify and formalize safety-response-induced lateral movement, a new attack class in which the prescribed safety response to a physical fault opens up a new attack surface. Any model that treats safety responses as terminal is unable to capture this class of attacks.
\item We demonstrate our approach on Tennessee Eastman; specifically, we show that the critical failure is unreachable using AG and FT, but is reachable in SG; we identify the enabling barrier, and validate every planner step in a live test bench with real tooling.

\end{enumerate}

The rest of this paper is organized as follows. Section \ref{sec:related-work} surveys related work. Section \ref{sec:runnning} presents the running example and test-bench construction. Section \ref{sec:threat-model} provides the threat model and assumptions. We formalize \frm\ in Section \ref{sec:SG}. Section \ref{sec:analysis} describes the workflow and analysis on running example. Section \ref{sec:eval} reports the evaluation. Section \ref{sec:con} concludes the findings and future works.
%We have evaluated our proposed modeling with a second example, an IEEE 14-bus power grid in Appendix \ref{app:power-grid}. 
% ------------------------
% \section{Background}
% \input{20-Background}
% ------------------------
\section{Related Work}

CPS security research has largely focused on  one direction -- attacks that originate in the IT layer and cascade  downward into operational technology -- ultimately disrupting or manipulating the physical process. The dominant tools for this analysis are the attack graph  \cite{barrere2023cyber,hawrylak2012using,ibrahim2019attack,talukder2025vkg2ag}, and a rich body of work has developed around it \cite{lemaire2017security,xie2013security,semertzis2022quantitative,sahu2023inferring,podder2026exam,podder2026sera}. %\textcolor{blue}{Provide citations - Done}

% Barrère et al. introduced Cyber-Physical Attack Graphs (CPAG) \cite{barrere2023cyber}, extending traditional attack graphs to represent the interconnected relationships between cyber and physical components in ICS. Similarly, Hawrylak et al. proposed Hybrid Attack Graphs (HAG) for the smart grid \cite{hawrylak2012using}, explicitly encoding physical parameters alongside cyber states so that cyber actions that indirectly cause physical failures can be represented. Ibrahim et al. model attacks and their effects through attack graphs that capture attacker privileges, system vulnerabilities, and state transitions, and employ model checking to verify security properties across jointly modeled cyber and physical elements \cite{ibrahim2019attack}. Adamos et al. assess CPS attack resilience using state dependency graphs \cite{adamos2024enhancing} that trace how cyber exploits propagate through component states, creating cascading attack paths across both IT and OT layers.

% Beyond attack graphs, Lemaire et al. automatically generate system- and attacker-specific attack trees \cite{lemaire2017security} from a formal CPS architecture model, representing cyber and physical components together at the architectural level. Xie et al. similarly use attack trees \cite{xie2013security} to quantitatively evaluate threat, vulnerability, and expected damage along attack paths, integrating cyber and physical factors within the same tree structure.

Barrère et al. extend attack graphs to Cyber-Physical Attack Graphs (CPAG), linking cyber and physical components in ICS \cite{barrere2023cyber}. Hawrylak et al. encode physical parameters alongside cyber states in Hybrid Attack Graphs for the smart grid, capturing cyber actions that cause physical failures \cite{hawrylak2012using}. Ibrahim et al. apply model checking over graphs of attacker privileges, vulnerabilities, and state transitions \cite{ibrahim2019attack}. Adamos et al. use state dependency graphs to trace cascading paths across IT and OT layers \cite{adamos2024enhancing}. Attack trees address the same problem. Lemaire et al. generate system- and attacker-specific trees from a formal CPS architecture model \cite{lemaire2017security}. Xie et al. quantify threat, CVEs, and expected damage along paths within one tree \cite{xie2013security}.

Several works have addressed quantitative risk in CPS. Semertzis et al. use attack graphs to model attacker 
progression in electric power systems  \cite{semertzis2022quantitative} and perform quantitative risk analysis. %, with physical impacts evaluated through a co-simulated digital twin. 
Presekal et al. propose a modeling approach for online 
situational awareness in power grid OT networks \cite{presekal2023attack}, detecting early-stage cyber kill-chain activity through anomaly detection and mapping behaviors into attack graphs. This work is extended \cite{presekal2025anomaly} to combine hybrid deep learning with spatial-temporal traffic features to detect and isolate attack nodes in the power grid. Few et al. present a case study \cite{few2021case}  using the Purdue model~\cite{williams1994purdue} to show how an attacker can traverse from Level 5 (Internet DMZ) down to Level 0 (Field), capturing multi-layer traversal within a single attack graph.

On the defensive side, Sahu and Davis model adversarial behavior in cyber-physical power systems using a Bayesian attack graph approach \cite{sahu2023inferring} to infer attacker intent and track progression under uncertainty. 
Banik et al. automatically synthesize defense strategies for CPSs \cite{banik2023automated} that account for adaptive adversary behavior. Khaled et al. introduce a framework for modeling industrial CPS under smart, adaptive attackers and recommending countermeasures \cite{khaled2020assessing}, capturing interactions across cyber, control, physical, and social components within a unified formal framework.

% Despite this breadth, the existing literature shares a structural limitation -- the cyber-physical integration is asymmetric. Cyber behavior is modeled explicitly and in detail, while the physical and operational side is treated as a collection of discrete end states or final attack objectives. None of these frameworks model the physical process as a dynamic layer whose faults and safety responses actively reshape the cyber attack surface. 

% The closest work is by Ray et al., who propose the Explainable Resiliency Graph (ERG) \cite{ray2023explainable} to analyze how system vulnerabilities can trigger safety events in a CPS and cause cascading 
% effects within a unified transition system. Building on this, Bashir et al. introduce the Resiliency Graph (RG) \cite{bashir2024resiliency}, which uses AI planning to capture the linkage between vulnerabilities and the system failures they induce, demonstrated on an IT-to-OT cascade. 

% But no existing framework jointly models cyber attacks, physical fault, and safety events in a unified propositional state 
% space. As a result, none can capture 
% scenarios in which IT attacks trigger OT faults (IT$\to$OT), and OT fault-driven safety responses creating new IT attack surfaces (OT$\to$IT). We address this in our current work.

Despite this breadth, the cyber-physical integration of cyber attacks, faults, safety measures remains asymmetric. While cyber behavior is modeled in detail, the physical and operational side appears only as discrete terminal states or final attack objectives. The physical process is never considered as a layer whose faults and safety responses reshape the cyber attack surface.
The closest work is by Ray et al., whose Explainable Resiliency Graph (ERG) analyzes how vulnerabilities trigger safety events and cascade within a unified transition system \cite{ray2023explainable}. Bashir et al. extend this with the Resiliency Graph (RG), using AI planning to link vulnerabilities to the failures they induce, demonstrated on an IT-to-OT cascade \cite{bashir2024resiliency}. Both treat propagation as one-directional. Neither captures the path, in which fault-driven safety responses open new IT attack surfaces (OT $\to$ IT). We address this.

\label{sec:related-work}
% ------------------------
\section{Running Example}
\label{sec:runnning}
% \begin{table*}[!h]
% \centering
% \scriptsize
% \caption{VM Stack for Test-bench}
% \label{tab:vm-stack}
% \renewcommand{\arraystretch}{1.15}
% \begin{tabular}{l|l|l|l|l}
% \hline
% \textbf{VM Name} & \textbf{OS} & \textbf{RAM, CPU, Storage} & \textbf{Components} & \textbf{Description} \\
% \hline
% HMI &  Debian & 1GB, 3, 100GB & ScadaBR & Perform real-time monitoring.\\
% Firewall  & Ubuntu & 512MB, 1, 5GB & pfSense& Separate OT and IT zone \\
% Workstation & Ubuntu & 1GB, 2, 10GB & OpenPLCv2 Editor & To program the PLC \\
% TE Plant  & Ubuntu & 1GB, 2, 5GB & Unity Engine& Web Visualization of TE \\
% PLC   & Ubuntu&512MB, 1, 8GB& OpenPLCv2 Runtime& Perform as a Virtual PLC \\
% Attacker & Kali & 8GB, 4CPU, 50GB & Metasploit and Nmap & To perform attack  \\
% \hline
% \end{tabular}
% \end{table*}

We start our discussion by first describing the ICS test bench of the Tennessee Eastman Process (TEP) plant \cite{chiang2001tennessee, formby2018lowering} using a combination of web-3D modeling and virtualization. TEP is a widely used chemical manufacturing benchmark developed by Eastman Chemical Company  \cite{chiang2001tennessee}. It serves as a standard for testing control, monitoring, and fault propagation, featuring five main units: a reactor, condenser, compressor, separator, and stripper.%, as shown in  as illustrated in Figure \ref{fig:TE-Plant}. 
%We have also evaluated our proposed modeling with another example, an IEEE 14-bus power grid in Appendix \ref{app:power-grid}. 

% \begin{figure}[!h]
%     \centering
%     \includegraphics[width=0.9\linewidth]{figure/TE-Plant-Normal.png}
%     \caption{A Simulated Tennessee Eastman Process (TEP) Plant running under normal condition.}
%     \label{fig:TE-Plant}
% \end{figure}

 \begin{figure}[!h]
    \centering
    \includegraphics[width=0.9\linewidth]{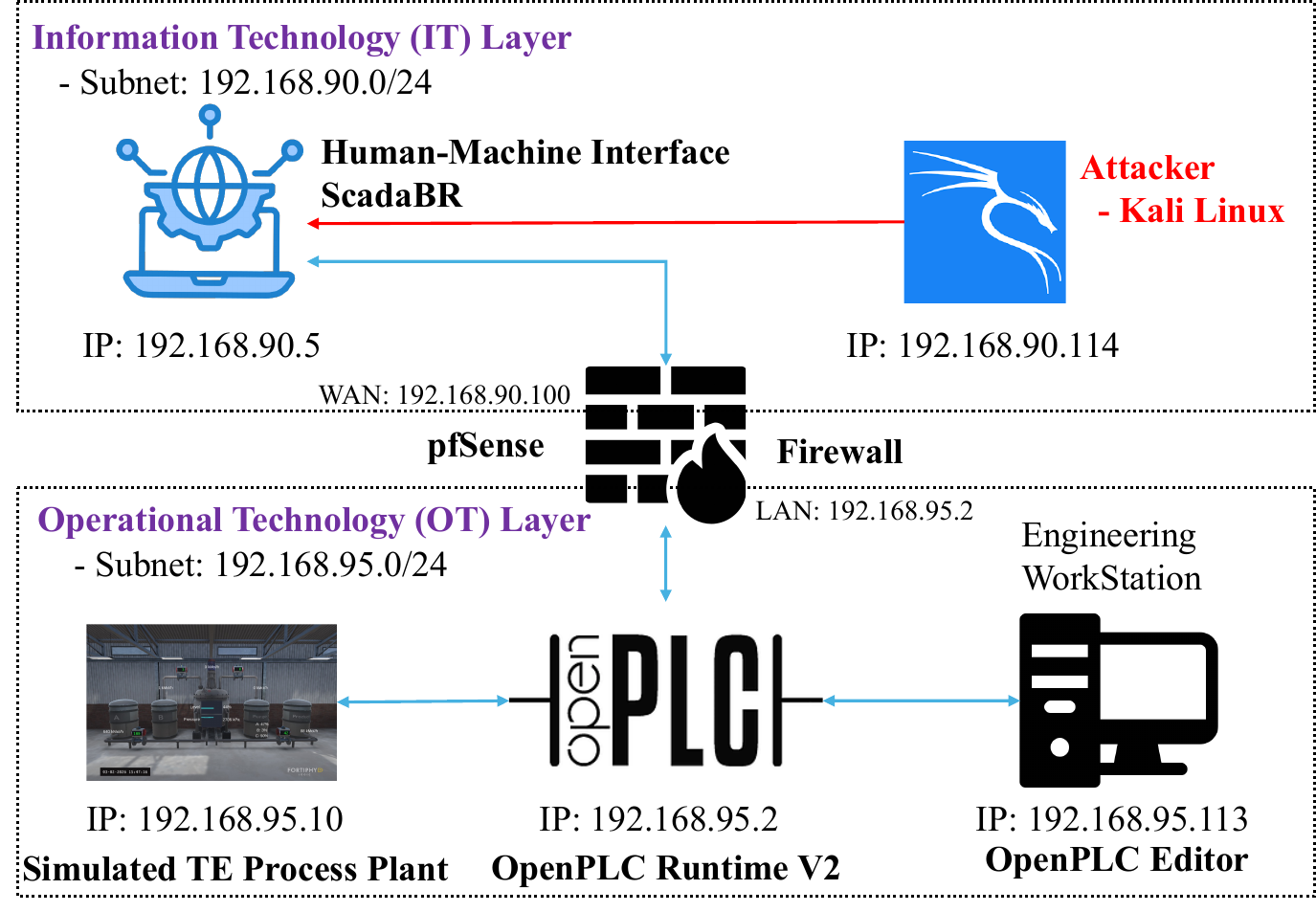}
    \caption{Schematic architecture of the ICS test bench with virtual HMI, firewall, PLC, workstation, and TEP plant simulation.}
    \label{fig:testbench-topology}
\end{figure}

The networked-ICS test-bench illustrated in Figure \ref{fig:testbench-topology} enables us to (i) execute realistic cyber actions, (ii) observe their physical impact on a controlled process simulation, and (iii) evaluate how safety responses can lead to new cyber exposure. 
The test-bench consists of a Human-Machine Interface (HMI) using ScadaBR, an OpenPLC software, an engineering workstation with OpenPLC Editor, a pfSense firewall separating the IT and OT subnets, a process simulation backend, and a Kali Linux attacker VM. %as illustrated in Table \ref{tab:vm-stack}. 
The IT and OT segments use the \texttt{192.168.90.0/24} and \texttt{192.168.95.0/24} subnets respectively. The firewall enforces segmentation between the two networks, while the process simulation and PLC jointly provide an observable control loop on which the effects of cyber actions can be validated. Information related faults and safety measures for TEP is collected from published articles \cite{shang2024tree, agbo2022conflict,reinartz2021extended}.

\section{Threat Model and Assumptions}
\label{sec:threat-model}

\noindent \underline{\textbf{Adversary Model.}} We consider a capability-bounded adversary with an established foothold in the IT layer — acquired potentially via phishing, exposed services, or exploiting known Common Vulnerabilities and Exposures (CVEs) — whose objective is to trigger safety-critical failures in the OT layer. The adversary has the following properties:
\begin{inparaenum}[(i)]
    \item The adversary exploits known vulnerabilities (e.g., published CVEs) in both IT and OT components. Note, zero-day exploits are outside the scope of this model.
    \item The adversary executes sequences of attack steps across networked components, progressing from an initial IT foothold toward OT targets by satisfying the preconditions of each step using postconditions established by prior steps.
    \item Cyber actions executed by the adversary may establish postconditions that satisfy the enabling conditions of OT fault transitions. %The resulting fault cascade and its propagation are governed by physical process dynamics and are not under adversary control. 
    The adversary cannot alter prescribed safety procedures but can corrupt the information environment such as sensor feeds, operator displays etc.\ on which barrier evaluation depends, and can exploit the operational side effects of safety responses as new attack surfaces.
    \item The adversary is constrained by network reachability, host-level access state, and the precondition structure of available attack actions. %Actions whose preconditions are not satisfied in the current system state cannot be executed.
\end{inparaenum}

\noindent \underline{\textbf{Operator Model.}} System operators follow safety procedures prescribed under IEC 61511 \cite{smith2020safety} and ISA-84 \cite{isa84} for each initiating fault event. These procedures are deterministic and documented. Given a fault, the sequence of barriers of event trees, and their outcomes is fixed by certified safety documentation. Operators may modify the IT layer between incidents — patching vulnerabilities, restricting connectivity, or disabling services — and may or may not select the prescribed safety response during a fault event.

\noindent \underline{\textbf{Modeling Assumptions.}}
\begin{inparaenum}[(i)]
    \item Vulnerability information, network configuration, and failure relationships are available from CVE databases. Penetration testing results, system logs, CTI reports, and engineering documentation are provided for each machine in the network. Attack graph (AG) construction, fault tree (FT) specification, and event tree (ET) derivation are grounded from these sources.
    \item All three modeling layers — the AG, FT, and ET — operate over a single global proposition set defined over the hosts and devices of the IT-OT deployment. This shared space is the mechanism by which cross-layer dependencies are captured without explicit linking relations.% i.e., postconditions of cyber attack actions can appear as preconditions of fault transitions; fault postconditions can appear as initiating event propositions that activate event trees; and barrier outcome postconditions can appear as preconditions of subsequent cyber attack actions. 
    This assumption requires that vulnerability descriptions, fault condition labels, and safety procedure states use semantically consistent propositions. Syntax mismatches across layers is addressed by domain expert reconciliation.% and are outside the scope of this work.
    % \item Both cyber attacks and physical fault transitions are expressible as precondition-postcondition tuples. This abstraction enables unified cross-layer reasoning but requires domain expertise for accurate instantiation.
    %\item Given a satisfied precondition set, the effects of attack actions, fault transitions, and barrier evaluations are deterministic. For safety barriers this reflects regulatory practice such as IEC 61511 which mandates exhaustively specified, prescribed responses for each fault scenario. For cyber actions this abstracts away execution uncertainty present in real exploits. This assumption enables classical planning-based analysis and is consistent with the defender's use case — identifying whether a dangerous path exists, not estimating its probability.
    \item The vulnerability and configuration state of the system is fixed at analysis time. Dynamic changes — patch deployment, configuration drift, or new CVE disclosure — are considered as new state which require re-running the analysis. %Incremental re-analysis as the system evolves is left as future work.
    \item  Analysis operates within a bounded relational scope such that there exists a finite set of hosts, a finite fault and event set derived from certified safety documentation, and a bounded number of AG $\to$ FT $\to$ ET cascade cycles. %Reasoning over unbounded cascade cycles is left as future work.
\end{inparaenum}

% ------------------------
\section{\frmtitle\ Formalization}
\label{sec:SG}

We now introduce \frmtitle, the unified modeling framework for cyber-resilient IT--OT systems that captures how attacks on IT, physical faults on OT, and safety responses during faults jointly shape system risk over time. \frm\ represents possible attacks in cyber space as an attack graph (AG), causal escalation of physical or operational failures as a fault tree (FT), and time-ordered safety barriers and operator or automation decisions that unfold after a fault as an event tree (ET). By explicitly capturing the cascade path, AG $\rightarrow$ FT $\rightarrow$ ET $\rightarrow$ AG $\rightarrow$ FT $\rightarrow \cdots$ (cyber-to-physical triggering followed by safety-response induced exposure that can create new cyber attack surfaces), \frm\ models how benign-seeming safety events or false alarms can enable new OT-to-IT and IT-to-OT pathways. %This integrated framework supports planning-based analysis to automatically discover cascading attack--fault--safety event paths, identify critical barriers and exposures, and reason about system-level outcomes under dynamic IT--OT interactions.

\frm\ considers an IT--OT deployment consisting of (i) a collection of hosts or devices arranged in a specified network architecture (HMI, workstations and PLCs), (ii) a set of physical/operational faults (sensor failure, open valve etc.) that may occur and propagate in the OT process (sensor failure  $\to$ reactor shutdown), and (iii) a time-ordered sequence of safety barriers and operational events (alarm visible, PLC code checks) that unfold once a fault occurs. We model all three layers using propositional attributes. Intuitively, a \emph{host-level} attribute captures cyber or device state (e.g., $\mathsf{has\_vuln}(host,\textsf{CVEs})$, $\mathsf{configurations}(host)$, $\mathsf{reachable}(host_1,host_2)$); a \emph{fault-level} attribute captures OT fault state (e.g., $\mathsf{has\_instrumental\_ failure(host)}$, $\mathsf{has\_ sensor\_ failure(host)}$), and an \emph{event-level} attribute captures safety-response state and temporal stage (e.g., $\mathsf{has\_system\_checked, has\_alarm\_raised, has\_repaired}$). These three sets comprise \frm\ attributes.
%Considering the running example, a IT-OT connected network where components such as HMI, PLC, WorkStations, TEP interact precisely by sharing propositions across layers such that an attack may establish a fault proposition such as `\textit{a mudbus injection on PLC leads to pressure sensor failure}'. The fault proposition i.e., `\textit{pressure sensor failure}' may enable event-tree progression -- `\textit{safety events for pressure sensor failure}'. And an event-tree barrier outcome i.e., `\textit{barrier: check plc logic in workstation}', may establish new host-level exposure propositions i.e, `\textit{barrier-outcome: workstation is live}', that enable subsequent attacks -- `\textit{compromised workstation}'.
%\frm\ consists of a collection of hosts or systems arranged in a specified IT-OT network architecture, each of which is identified by some network label propositions, a set of faults that could occur in the OT captured fault level propositions and a sequence of safety events during an occurrence of faults captures as safety/event level propositions. 
%The  host attribute sets consist of propositional facts that are used to capture network host level information, such as host configurations, vulnerabilities, open ports. The fault level attribute sets capture all the initial, intermediate and top level faults. The event level attributes are the propositions of safety events during faults.

\begin{defn}
Let $\mathbf{H}$ represent the set of unique hosts and their labels, where $\mathbf{H} = \{\mathbf{H}_1, \mathbf{H}_2, \ldots, \mathbf{H}_n\}$ and $P_{\mathbf{H}}$ represent the attribute set of propositions, used to describe the system state in IT-OT network.
Each host $\mathbf{H}_i \in \mathbf{H}$ has a unique collection of attributes, namely, $P_i$, that are true for $\mathbf{H_i}$, where $P_i \subseteq P_\mathbf{H}$ represents a set of propositions about host $\mathbf{H_i}$'s attribute (network-level, fault-level, safety/event level). 
\end{defn}

%\frm\ reasoning operates over the joint valuation space of that AG-relevant propositions (cyber/host facts), FT-relevant propositions (fault facts), and ET-relevant propositions (safety-response facts), represented as $P_{\mathbf{H}}$.
For any instance, an \frm\ system state is a set $s\subseteq P_{\mathbf{H}}$ of propositions that are currently true. Note that attack graph, fault tree and event trees revolve around a finite set of hosts/devices where some propositions can be true for all, so rather than separate proposition set, we define $\mathbf{P_H}$ as global proposition set over all the hosts of IT-OT network. 

\subsection{Attack Graph (AG)}
\label{subsec:ag}
We represent an attack using preconditions and postconditions.
Let $\mathcal{A}$ be the set of attacks associated with a network, where each attack $a \in \mathcal{A}$ is captured by a tuple of the form $a = \langle pre(a), post^+(a),$ $ post^-(a) \rangle $, where for any attack tuple, $pre(a) \subseteq P_\mathbf{H}$ captures the preconditions, $post^+(a)\subseteq P_\mathbf{H}$ captures the positive postconditions (i.e., the set of attributes turned true), and $post^-(a)\subseteq P_\mathbf{H}$ captures the negative postconditions (i.e., the set of attributes turned false). Thus, to capture all the attacks on the network we formally define attack graph (AG) as an AND-OR graph \cite{podder2025spear}.  

\begin{defn}
 An Attack Graph (AG) is a graph,  $\mathbb{G} = \langle\mathbf{N_A}, \mathbf{E_A} \rangle$, where $\mathbf{N_A}$ are the nodes and $\mathbf{E_A}$ are the edges. Nodes represent hosts along with their attributes, namely, $\mathbf{N_A} \subseteq \mathbf{H} \times 2^{P_{\mathbf{H}}}$, and edges are attack edges. 
  An Attack Edge $e \in \mathbf{E}_\mathbf{A}$ of the form $\src(e) \subseteq \mathbf{N_A}$ and $\dst(e)  \subseteq \mathbf{N_A}$ exists iff there is an  attack $a_i$ =  $\langle \pratt{a}{i}, \postp{a}{i},$ $ \postn{a}{i}\rangle$, where
 \begin{itemize}
     \item  $\exists 
        n_i \in \src(e)$,  that corresponds to the same host as the one for $n_i' \in \dst(e)$, where $n_i = (h_i, P_i)$
     \item $\pratt{a}{i} \subseteq \bigcup_{n_j \in \src(e)} P_j$, where $n_j = (h_j, P_j)$
     \item for the node $n_i' = (h_i, P_i')$, we have $P_i' = (P_i \setminus \postn{a}{i}) \cup \postp{a}{i}$ 
 \end{itemize}
\end{defn}

For example, consider an attack $a =$ `\textit{discover PLC via arp spoofing}', the $pre(a)$ is `\textit{attacker present}' $\wedge$ `\textit{HMI is compromised}' $\wedge$ `\textit{PLC is live}' $\wedge$ `\textit{attacker performs ARP spoofing}'. %The postcondition of the attack is `\textit{PLC is discovered}' which becomes precondition of another attack. 
Each attack enables an edge ($e$) from $src(e) = \{attacker, HMI, PLC\}$ to $dst(e) = \{PLC\}$, thus forming attack graph. 
%We assume a monotonic attack model, i.e., attacks do not delete host-level attributes, therefore for every attack edge $e$ we have $P_{\mathbf{H}} \cap post^{-}(a)=\emptyset$, meaning once an attribute becomes true it remains true and the attacker may revisit previously reached nodes.

\subsection{Fault Tree (FT)}
\label{subsec:ft}

We represent a fault occurrence and escalation using preconditions and postconditions over fault-level attributes. Let $\mathcal{F}$ be the finite set of faults in the OT system (including basic, intermediate,
and top-level faults). Each fault $f \in \mathcal{F}$ is captured by a tuple of the form $f = \langle pre(f), post^+(f), post^-(f) \rangle$, where $pre(f) \subseteq P_{\mathbf{H}}$ captures the enabling conditions for $f$ to occur
(e.g., the occurrence of basic faults, intermediate faults, or operating conditions), $post^+(f)\subseteq P_{\mathbf{H}}$ captures the positive postconditions (fault attributes that become true after $f$ occurs), and $post^-(f)\subseteq P_{\mathbf{H}}$ captures the negative postconditions (fault attributes that become false after $f$ occurs, if any). We treat the fault tree (with AND/OR gates) as inducing a transition system over fault propositions,
yielding the directed tree $\mathbb{F}$ defined below.
% Please note that for 

\begin{defn}
A Fault Tree (FT) is denoted as $\mathbb{F}=\langle \mathbf{N_F},\mathbf{E_F}\rangle$, where $\mathbf{N_F}$ are nodes and $\mathbf{E_F}$ are directed fault edges.
Nodes represent hosts with fault-level attributes, $\mathbf{N_F} \subseteq \mathbf{H}\times 2^{P_{\mathbf{H}}}$.

For any edge $e\in \mathbf{E_F}$, let $\src(e)\subseteq \mathbf{N_F}$ denote its source node-set and let $\dst(e)\in \mathbf{N_F}$ denote its destination node.
A fault edge $e$ with $\src(e)$ and $\dst(e)$ exists iff there exists $f_j = \langle pre^j(f),post^{+j}(f),post^{-j}(f)\rangle$ fault such that:
\begin{itemize}
    \item $\exists\, n_j\in \src(e)$ and $n_j'\!=\!(h_j,P_j')$ correspond to the same host $h_j$, where $n_j=(h_j,P_j)$
    \item $pre^j(f)\subseteq \bigcup_{n_k\in \src(e)} P_k$, where $n_k=(h_k,P_k)$;
    \item $P_j'=(P_j\setminus post^-(f))\cup post^+(f)$ for node $n_j' = (h_j, P_j')$.
\end{itemize}
\end{defn}

Note that we treat the fault tree as a host-scoped fault-propagation tree that encodes how fault conditions can occur and escalate at specific systems in the IT–OT deployment. Concretely, each fault proposition is interpreted relative to a host, instead of modeling a fault as a standalone event, we model `\textit{fault-level propositions $p_f$ holds at host $h$}' using a host-indexed proposition $(h,p_f)$ where $h\in \mathbf{H}$ and $p_f\subseteq P_{\mathbf{H}}$. For example, `\textit{pressure sensor failure}' is a fault-level proposition $p_i$ that holds at host $h_i$, `\textit{Simulated TE Process plant}'. Together,
these define the fault node \(n_i=(h_i,p_i)\).

%A fault ($f \in \mathcal{F}$) is represented using an action-like tuple (preconditions and postconditions). Intuitively, when the enabling conditions of a fault become true (possibly requiring cyber facts, physical operating conditions, or previously established fault facts), the fault ``fires'' and its postconditions mark new fault propositions as true on the corresponding host. The directed edges of the fault tree capture exactly these host-level fault transitions, from a source node-set whose propositions satisfy the fault’s preconditions to a destination node that reflects the updated fault state for the affected host.
%Also, we assume a monotonic fault-propagation model, i.e., fault transitions do not delete fault-level attributes; therefore for every fault tuple $e$ we have $P_{\mathbf{H}} \cap post^{-}(f)=\emptyset$, meaning once a fault condition becomes true it remains true. 

\subsection{Event Tree (ET)}
\label{subsec:et}

An Event Tree (ET) is a forward, consequence-oriented model that describes how a system evolves after an initiating event (a fault) by enumerating a time-ordered sequence of safety barriers such as operator actions, procedures, and automatic protections. Each barrier yields a discrete outcome (i.e., yes or no) along with a change of states in the host, and the combination of barrier outcomes determines the resulting operational or safety states. %In \frm, the event tree captures how safety responses to faults can both mitigate escalation and, in some cases, unintentionally expose new cyber attack surfaces that enable further attack and cascading failures.

\begin{defn}
    An initiating event ($\mathbf{i_e}$) is the first significant deviation from a normal operating condition that may lead to unwanted consequences. 
\end{defn}

Note that by definition each event tree is triggered by exactly one initiating event ($i_e$) but as we are discussing about IT-OT network, there could be multiple faults and multiple initial events ($I_E$) which leads to multiple ETs for the network. To reduce notation overload all the below discussion is for single event tree initiated due to an initial event ($i_e$).

A timed barrier system in event tree (ET) can be modeled as an ordered collection of barriers together with discrete time steps and a binary outcome condition for each barrier. 
\begin{defn}
    Let $\mathbf{B}=\{ b_1,\dots,b_m\}$ be the ordered set of barriers and let $\mathbf{T}=\{ t_0,t_1,\dots,t_m\}$ be the ordered set of discrete time steps for initial event $i_e$, encoded as propositions $\mathsf{at}(t)\in P_{\mathbf{H}}$. For each barrier $b_i\in\mathbf{B}$, there is a propositional condition $\mu(b_i)\in P_{\mathbf{H}}$ that determines the barrier outcome (true or false).
\end{defn}
%  We represent each barrier $b_i \in \mathbf{B}$ using a precondition set $pre(b_i)\subseteq P_{\mathbf{H}}$ and outcome-dependent postconditions $(post^{+T}(b_i),$ $post^{-T}(b_i))$ for the true outcome and $(post^{+F}(b_i),post^{-F}(b_i))$ for the false outcome, where all pre- and post-conditions are subsets of $P_{\mathbf{H}}$. The time-ordered evaluation of barriers is captured by including the stage predicate $\mathsf{at}(t_{i-1})$ in $pre(b_i)$. Regardless of outcome, barrier evaluation advances the stage by ensuring $\mathsf{at}(t_i)$ is added and $\mathsf{at}(t_{i-1})$ is removed. Thus, for any event-state $s\subseteq P_{\mathbf{H}}$ satisfying $pre(b_i)\subseteq s$, the successor state $s'$ satisfies
% $\mathsf{at}(t_i)\in s'$ and $\mathsf{at}(t_{i-1})\notin s'$, and the remaining changes are determined by the outcome of $\mu(b_i)$ such as if $\mu(b_i)\in s$ then $post^{+T}(b_i)\subseteq s'$ and $post^{-T}(b_i)\cap s'=\emptyset$, whereas if $\mu(b_i)\notin s$ then $post^{+F}(b_i)\subseteq s'$ and $post^{-F}(b_i)\cap s'=\emptyset$.
% Finally, each barrier is marked as completed using an execution proposition $\mathsf{resolved}(b_i)\in P_{\mathbf{H}}$ that is included in both $post^{+T}(b_i)$ and $post^{+F}(b_i)$, ensuring that each $b_i$ is evaluated exactly once while its true/false outcome selects between the two postcondition sets.
The deterministic barrier model reflects regulatory practice in ICS safety engineering. Under IEC 61511 and ISA-84, each barrier outcome is prescribed in advance by domain experts for each initiating event — the safety procedure is a fixed document, not a runtime decision. Accordingly, barrier truth values $\mu(b_i)$ can be changed based on operator's decision as an operator has a choice to implement the barrier or not.

\begin{defn}
\label{def:et}
An Event Tree (ET) is defined as a tuple, $\mathbb{E}=\langle \mathbf{i_e}, \mathbf{B}, \mu, \mathbf{T}, \mathbf{O}\rangle$ where:
\begin{itemize}
    \item $\mathbf{i_e}$ is the initiating-event propositions that triggers the ET.
    \item $\mathbf{B}=\{ b_1,\dots,b_m\}$ is an ordered set of timed barriers.
    \item $\mathbf{T}=\{ t_0,t_1,\dots,t_m\}$ is an ordered set of discrete time steps represented by propositions $\mathsf{at}(t)\in P_{\mathbf{H}}$.
    \item $\mu:\mathbf{B}\rightarrow P_{\mathbf{H}}$ assigns each barrier $b_i$ a propositional condition $\mu(b_i)$ whose truth value determines the barrier outcome (True/False).
    \item Each barrier $b_i$ is associated with a common precondition set $pre(b_i)\subseteq P_{\mathbf{H}}$ and two outcome-dependent postcondition pairs $(post^{+T}(b_i),post^{-T}(b_i))$ and $(post^{+F}(b_i),post^{-F}(b_i))$, all subsets of $P_{\mathbf{H}}$, where $\mathsf{at}(t_{i-1})\in pre(b_i)$ and $\mathsf{at}(t_i)\in post^{+T}(b_i)\cap post^{+F}(b_i)$ and $\mathsf{at}(t_{i-1})\in post^{-T}(b_i)\cap post^{-F}(b_i)$.
    \item $\mathbf{O}\subseteq P_{\mathbf{H}}$ is the set of ET outcome propositions.
\end{itemize}
For any event state $s\subseteq P_{\mathbf{H}}$ satisfying $i_e\in s$ and $pre(b_i)\subseteq s$, the successor state $s'$ after evaluating $b_i$ is defined by:
\begin{align*}
    s'=
\begin{cases}
\big(s\setminus post^{-T}(b_i)\big)\cup post^{+T}(b_i) & \text{if }\mu(b_i)\in s,\\[2pt]
\big(s\setminus post^{-F}(b_i)\big)\cup post^{+F}(b_i) & \text{if }\mu(b_i)\notin s.
\end{cases}
\end{align*}
\end{defn}
A complete ET execution for $i_e$ is a sequence of evaluations $\{ b_1,\dots,b_m\}$ starting at $\mathsf{at}(t_0)$ and ending at $\mathsf{at}(t_m)$, producing an outcome valuation over $\mathbf{O}$ determined by the barriers.% outcomes.

As described in Algorithm \ref{alg:et-timed-yesno}, after executing the ET for initiating event $i_e$, the evaluation of each barrier $b_i$ yields a binary result determined by whether $\mu(b_i)$ holds in the state at the time of evaluation. We capture these barrier results as an outcome vector $\omega=\langle \omega_1,\dots,\omega_m\rangle$, where each component records the observed outcome of barrier $b_i$, i.e., $\omega_i=T$ if $\mu(b_i)$ holds (equivalently, $True(b_i)$ becomes true), and $\omega_i=F$ otherwise (equivalently, $False(b_i)$ becomes true). The overall event-tree outcome is defined by an outcome function $\mathsf{out}:\{T,F\}^m \rightarrow \mathbf{O}$,
(or more generally $\mathsf{out}:\{T,F\}^m \rightarrow 2^{\mathbf{O}}$ when multiple outcome propositions may hold simultaneously), which maps a complete barrier-outcome pattern to the corresponding ET consequence class. Intuitively, $\mathsf{out}$ encodes the ET leaf semantics such that each root-to-leaf path corresponds to a unique truth assignment to the barriers, and $\mathsf{out}(\omega)$ returns the outcome proposition(s) labeling that leaf (e.g., \textit{mitigated}, \textit{degraded}, \textit{unsafe-escalation}, \textit{exposure-created}). Therefore, once $\omega$ is obtained after evaluating $\{ b_1,\dots,b_m\}$, the ET outcome is given by $O=\mathsf{out}(\omega).$

\begin{algorithm}[tbp!]
\scriptsize
\caption{Timed Event-Tree Execution with Yes/No Barriers}
\label{alg:et-timed-yesno}
\vspace{2pt}
\SetKwInOut{Input}{Input}
\SetKwInOut{Output}{Output}
\Input{Initiating event $i_e\in \mathbf{I_E}$, ordered barriers $\mathbf{B}=\{ b_1,\dots,b_m\}$, time steps $\mathbf{T}=\{ t_0,\dots,t_m\}$, condition map $\mu:\mathbf{B}\rightarrow P_{\mathbf{H}}$, barrier specs $pre(\cdot),post^{+T}(\cdot),post^{-T}(\cdot),post^{+F}(\cdot),post^{-F}(\cdot)$, outcome map $\mathsf{out}:\{\mathsf{True},\mathsf{False}\}^m\rightarrow \mathbf{O}$}
\Output{Outcome $O$ and final event-state $s_{\mathbf{H}}$}
\vspace{2pt}

$s_{\mathbf{H}} \leftarrow \{i_e,\mathsf{at}(t_0)\}$\;
$\omega \leftarrow \langle\ \rangle$\;

\For{$i \leftarrow 1$ \KwTo $m$}{
    \tcp{Timed enablement and ordering}
    \If{$pre(b_i)\not\subseteq s_{\mathbf{H}}$}{
        \textbf{return} $(\langle\ \rangle,\; s_{\mathbf{H}})$\;
    }

    \tcp{Evaluate barrier outcome \& update $s_{\mathbf{H}}$}
    \eIf{$\mu(b_i)\in s_{\mathbf{H}}$}{
        $\omega_i \leftarrow \mathsf{True}$\;
        $s_{\mathbf{H}} \leftarrow \big(s_{\mathbf{H}} \setminus post^{-T}(b_i)\big)\ \cup\ post^{+T}(b_i)$\;
    }{
        $\omega_i \leftarrow \mathsf{False}$\;
        $s_{\mathbf{H}} \leftarrow \big(s_{\mathbf{H}} \setminus post^{-F}(b_i)\big)\ \cup\ post^{+F}(b_i)$\;
    }
    Append $\omega_i$ to $\omega$\;
}

$O \leftarrow \mathsf{out}(\omega)$\;
\textbf{return} $(O,\; s_{\mathbf{H}})$\;
\end{algorithm}

\subsection{\frmtitle}
\label{subsec:sg}
%In this work we treat the Event Tree (ET) as a \emph{structured object}(initiating event, ordered barriers, and an outcome function) rather than as an explicit graph. Consequently, \frm\ does not introduce separate ``event nodes'' or ``event edges''.

\frmtitle\ unifies \darkcircled{1} the attack graph $\mathbb{G}=\langle \mathbf{N_A},\mathbf{E_A}\rangle$,
\darkcircled{2} the fault tree $\mathbb{F}=\langle \mathbf{N_F},\mathbf{E_F}\rangle$, and
\darkcircled{3} a family of event-tree objects $\{\mathbb{E}_{i_e}\}_{i_e\in \mathbf{I_E}}$,
one ET for each initiating-event proposition, where $\mathbf{I_E}$ is set of all possible initial event $i_e$.
%For each $i_e\in \mathbf{I_E}$, $\mathbb{E}_{i_e}=\langle i_e,\mathbf{B}^{i_e},\mu^{i_e},\mathbf{T}^{i_e},\mathbf{O}^{i_e}\rangle$ is activated when $i_e$ becomes true, and specifies an ordered barrier sequence $\mathbf{B}=\{ b_1,\ldots,b_m\}$ evaluated over discrete time steps $\mathbf{T}=\{ t_0,\ldots,t_m\}$. 
For ease of representation, $\mathbb{E}_{i_e}=\langle i_e,\mathbf{B}^{i_e},\mu^{i_e},\mathbf{T}^{i_e},\mathbf{O}^{i_e}\rangle$ will be subsequently denoted as $\mathbb{E}=\langle \mathbf{i_e}, \mathbf{B}, \mu, \mathbf{T}, \mathbf{O}\rangle$.
%are the same event tree for $i_e$. We do not represent the former way due to notation overload.
%Barrier evaluation yields a binary outcome vector $\omega\in\{T,F\}^m$, and the corresponding ET consequence is $O=out(\omega)\in\mathbf{O}$. %Please note here $\mathbf{B},\mu,\mathbf{T},\mathbf{O}$ corresponds to single initial event $i_e$ not for all 

Cross-domain dependencies in \frm\ are captured \emph{without} separate linking relations. %All components share the same propositional set $P_{\mathbf{H}}$ of host's attribute. 
Cyber-to-physical triggering is represented by faults whose enabling conditions include
host-level propositions established by attacks (e.g., $pre(f)\subseteq P_{\mathbf{H}}$ may contain propositions produced by some attack action). Fault-to-event activation is represented by initiating-event propositions $i_e\in \mathbf{I_E}\subseteq P_{\mathbf{H}}$
that are produced by fault occurrences. Safety-response induced cyber exposure is represented by barrier effects and/or ET outcomes which add host-level propositions in $P_{\mathbf{H}}$ that can serve as preconditions of subsequent attacks. We summarize this exposure using a mapping $\psi:\mathbf{O}\rightarrow 2^{P_{\mathbf{H}}}$, where $\psi(O)$ denotes the set of propositions made available after ET outcome $O$. Now we can formally define the \frmtitle\ framework.

\begin{defn}
\frmtitle\ is defined as the unified model
\[
\frm\ =\langle \mathbb{G},\mathbb{F},\bigcup_{i_e\in \mathbf{I_E}} \{\mathbb{E}_{i_e}\}\rangle,
\]

where $\mathbb{G}=\langle \mathbf{N_A},\mathbf{E_A}\rangle$ is an attack graph, $\mathbb{F}=\langle \mathbf{N_F},\mathbf{E_F}\rangle$ is a fault tree, and for each $i_e\in \mathbf{I_E}$,
$\mathbb{E}_{i_e}=\langle i_e,\mathbf{B}^{i_e},\mu^{i_e},\mathbf{T}^{i_e},\mathbf{O}^{i_e}\rangle$
be an event-tree triggered by $i_e$.

\end{defn}

Please note that the definitions above — $\mathbb{G},\mathbb{F}, \mathbb{E}$, and their unification as $SG$ — constitute the complete modeling framework. %For formal completeness, Appendix \ref{app:extra-def} defines derived path notions (attack paths, fault paths, event paths, and their cross-layer compositions) that characterize the structural types of traces \frm\ can represent. These definitions follow directly from the framework above and are not required to understand the planning compilation or the case study analyses, thus they are provided as a reference for extending the framework.
%All analysis performed on \frm\  will be represented in terms of presence or absence of path that exists through $\mathbb{G},\mathbb{F}, \mathbb{E}$ of IT-OT network. 
However, instead of performing the analysis directly on  (\frm) graph, we first compile them into a planning problem and use planners to perform the analysis. Since planning scales to large problem spaces, it is well-suited for automated reasoning over complex IT-OT deployments~\cite{podder2025spear, bashir2024resiliency}. 

\subsection{Planning Compilation for \frm}
\label{subsec:planning}

A classical planning problem is defined by a tuple $\mathcal{M} = \langle F, A, I, G \rangle$, where $F$ is a finite 
set of propositional fluents that characterizes the state space. A state $s \subseteq F$ is the set of fluents currently true. $A$ is the set of actions available to the planner, $I \subseteq F$ is the initial state, and $G \subseteq F$ is the goal state. A planning problem encodes a transition system whose nodes are states and whose directed edges are state transitions induced by action execution. Each action $a \in A$ is defined by 
the tuple $a = \langle \textit{pre}(a), \textit{add}(a), \textit{del}(a) \rangle$, where $\textit{pre}(a) \subseteq F$ specifies the preconditions that must hold for $a$ to be executable, $\textit{add}(a) \subseteq F$ specifies the fluents set to true upon execution, and $\textit{del}(a) \subseteq F$ specifies the fluents set to false. Executing action $a$ in state $s$ produces a successor state via the transition function,  $\Gamma_{\mathcal{M}}(s, a) = (s \setminus \textit{del}(a)) \cup \textit{add}(a), 
\quad \text{if } \textit{pre}(a) \subseteq s$. 
%$\Gamma_{\mathcal{M}}$:
A solution to a planning problem, called a \emph{plan}, is a sequence of actions $\pi = \langle a_1, \ldots, a_k \rangle$ such that $\Gamma_{\mathcal{M}}(I, \pi) \supseteq G$, where 
$\Gamma_{\mathcal{M}}$ is extended to action sequences in the natural way. A plan is \emph{optimal} if no shorter valid plan exists.

To compile a \frm\ instance into a classical planning model $\mathcal{M}=\langle F,A,I,G\rangle$, we treat every propositional fact that can become true during a valid path as a planning fluent, and we treat every possible evolution step (attack, fault propagation, and barrier evaluation) as a planning action. The fluent set $F$ is formed from a \emph{global attribute set of propositional facts} such that host/device
facts are represented as host-indexed fluents for pairs $(h,p)\in \mathbf{H}\times P_{\mathbf{H}}$ using a mapping function $\gamma^{\mathcal{M}}$, while event-tree
facts (e.g., stage propositions $\mathsf{at}(t)$, barrier-status propositions $\mathsf{resolved}(b)$, decision
propositions $\mu(b)$, and outcome propositions) are represented as \emph{global} fluents using a mapping function $\varphi^{\mathcal{M}}$ since the ET is modeled as a structured object over propositions rather than as host-level state.

%The action set $A$ is the union of three action families: (i) for each attack edge in $\mathbf{E_A}$, we introduce an action whose preconditions correspond to the attributes labeling the edge's source node-set, and whose effects apply the associated attack tuple's $post^{+}$ and $post^{-}$ update to the destination host; (ii) for each fault edge in $\mathbf{E_F}$, we introduce an action defined analogously using the fault tuple's $pre(\cdot)$ and $post^{\pm}(\cdot)$; (iii) for each barrier $b$ in event tree $\mathbb{E}_{i_e}$, we introduce a single barrier-evaluation action that advances the ET stage and marks $b$ as resolved, while selecting the True versus False branch using the decision proposition $\mu(b)$.The truth values of decision propositions $\mu(b)$ are fixed in $I$, reflecting the deterministic safety procedures specified in domain expert documentation (see $\S$~\ref{subsec:et}). Barrier branching is thus compiled without introducing nondeterminism.
%keeping $\mathcal{M}$ within positive-precondition STRIPS.
%The initial state $I$ consists of the host facts induced by the given initial node-set $S_N^0$ together with the ET initialization facts (e.g., $\mathsf{at}(t_0)$) and the fixed decision propositions $\mu(b)$. Finally, the goal $G$ is typically a target host-attribute fluent (e.g., achieving $(h_t,p_t)$), but it also encodes fault-level or safety-level objectives such as reaching a top-fault proposition or an unsafe event-tree outcome.

\begin{defn}
\label{def:planning}
For a given \frm, $S_N^0$ be an initial node-set and let $(h_t,p_t)$ be the target host-attribute, a representative classical planning model is defined as
$\mathcal{M}^{\mathsf{SG}}=\langle F^{\mathsf{SG}},A^{\mathsf{SG}},I^{\mathsf{SG}},G^{\mathsf{SG}}\rangle$,
where:

\begin{itemize}
    \item
    $F^{\mathsf{SG}}$ contains fluents for host-attribute pairs in $\mathbf{H}\times P_{\mathbf{H}}$ and or event-tree propositions relevant to ET execution.
    $\gamma^{\mathcal{M}}$ maps $\mathbf{H}\times P_{\mathbf{H}}$ to $F^{\mathsf{SG}}$. For each ET proposition $p \in \mathbf{P_H}$ relevant to ET execution (for example $q \in \{at(t), resolved(b), \mu(b), O, i_e\}$ is mapped to  $F^{\mathsf{SG}}$ using $\varphi^{\mathcal{M}}$ (to simplify notation we will also allow the application of the function over sets). 

    \item
    $A^{\mathsf{SG}}$ contains three families of actions:
    \begin{enumerate}
        \item
        For each attack edge $e\in \mathbf{E_A}$ with destination node $(h_i,P_i)$, create an action $a_e$ where
        \[
        pre(a_e)=\bigcup_{(h,P)\in src(e)} \gamma^{\mathcal{M}}(\{(h,p)\mid p\in P\}).
        \]
        Let $a=\langle pre(a),post^+(a),post^-(a)\rangle\in \mathcal{A}$ be the attack tuple inducing $e$.
        Then $add(a_e)=\gamma^{\mathcal{M}}(\{h_i\}\times post^+(a))$, $del(a_e)=\gamma^{\mathcal{M}}(\{h_i\}\times post^-(a))$.

        \item
        For each fault edge $e\in \mathbf{E_F}$ with destination node $(h_j,P_j)$, create an action $a_e$ where
        \[
        pre(a_e)=\bigcup_{(h,P)\in src(e)} \gamma^{\mathcal{M}}(\{(h,p)\mid p\in P\}).
        \]
        Let $f=\langle pre(f),post^+(f),post^-(f)\rangle\in \mathcal{F}$ be the fault tuple inducing $e$.
        Then
        $add(a_e)=\gamma^{\mathcal{M}}(\{h_j\}\times post^+(f))$,
        $del(a_e)=\gamma^{\mathcal{M}}(\{h_j\}\times post^-(f))$.
        %(With our monotonic-fault assumption, $post^-(f)=\emptyset$ and thus $del(a_e)=\emptyset$.)

        \item 
        For each event tree $\mathbb{E}_{i_e}\in \mathbf{E_\mathbb{I_E}}$ with ordered barriers
        $\mathbf{B}=\langle b_1,\dots,b_m\rangle$ and time steps $\mathbf{T}=\langle t_0,\dots,t_m\rangle$,
        include action $a_{b_k}$ for each barrier $b_k$.
        The precondition is
        \[
        pre(a_{b_k})=\varphi^{\mathcal{M}}\{(p)\mid p\in pre(b_k)\})
        \]
        where $pre(b_k)$ includes the required stage proposition $\mathsf{at}(t_{k-1})$ and any enabling facts
        (including $i_e$ when needed).
        The effects contain conditional branch determined by the decision proposition $\mu(b_k)\in P_{\mathbf{H}}$,\\
            $\textbf{if } \varphi^{\mathcal{M}}((\mu(b_k))) \textbf{ then }$
            $add(a_{b_k})=\varphi^{\mathcal{M}}(\{p\mid p\in post^{+T}(b_k)\})$,
            $del(a_{b_k})=\varphi^{\mathcal{M}}(\{(h_\star,p)\mid p\in post^{-T}(b_k)\})$,
            \textbf{else }
            $add(a_{b_k})=\varphi^{\mathcal{M}}(\{p\mid p\in post^{+F}(b_k)\})$,
            $del(a_{b_k})=\varphi^{\mathcal{M}}(\{p\mid p\in post^{-F}(b_k)\})$.
        Here the truth value of $\mu(b_k)$ is fixed by the initial state (operator/automation decision as an
        initial condition). Please note that $\varphi^{\mathcal{M}}(\mathsf{at}(t_k))),\varphi^{\mathcal{M}}(\mathsf{resolved}(b_k))$ is part of $add(a_{b_k})$ and $del(a_{b_k})$ for both $\varphi^{\mathcal{M}}((\mu(b_k)))$ true or false.
    \end{enumerate}

    \item
    $I^{\mathsf{SG}}=\Big(\bigcup_{(h,P)\in S_N^0}\gamma^{\mathcal M}(\{(h,p)\mid p\in P\})\Big)
    \ \cup\ \varphi^{\mathcal M}(\{\mathsf{at}(t_0)\}\cup\{\mu(b)\mid b\in\mathbf{B}\}).$

    \item
    $G^{\mathsf{SG}}=\{\gamma^{\mathcal{M}}((h_t,p_t))\}$.
\end{itemize}
\end{defn}

We now show that the representative planning model $\mathcal{M}^{SG}$ is a sound and complete representation of the \frm\ instance.

\begin{theorem}
\label{thm:soundness-completeness}
Let $SG = \langle \mathbb{G}, \mathbb{F}, \bigcup_{i_e \in I_E} \{\mathbb{E}_{i_e}\} \rangle$ be an  \frmtitle\ instance, $S^0_N$ be an initial node-set, and let $(h_t, p_t)$ be a target host-attribute. The planning model $\mathcal{M}^{SG} = \langle F^{SG}, A^{SG}, I^{SG}, G^{SG} \rangle$ is a sound and complete representation of \frm:
\begin{itemize}
    \item[\textbf{C1}] \emph{(Soundness)} Any valid plan $\pi$ in $\mathcal{M}^{SG}$ corresponds to a valid cross-layer path through \frm, i.e., a sequence of attack edges in $\mathbb{G}$, fault edges in $\mathbb{F}$, and barrier evaluations in $\{\mathbb{E}_{i_e}\}_{i_e \in I_E}$.% that reaches the target proposition $\gamma^\mathcal{M}(h_t, p_t)$ from the initial node-set $S^0_N$.
    \item[\textbf{C2}] \emph{(Completeness)} Any valid path through \frm\ from $S^0_N$ to $(h_t, p_t)$ corresponds to a valid plan $\pi$ in $\mathcal{M}^{SG}$.
\end{itemize}
\end{theorem}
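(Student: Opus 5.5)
The plan is to prove both directions by induction on the length of the sequence, using an invariant that ties planning states to \frm\ states. Define the abstraction map $\alpha$ from a planning state $s\subseteq F^{\mathsf{SG}}$ to an \frm\ configuration. For every host $h$ it gives the node $(h,\{p \mid \gamma^{\mathcal M}((h,p))\in s\})$. It also gives the global event-state $\{q \mid \varphi^{\mathcal M}(q)\in s\}$. The first step is to check that $\gamma^{\mathcal M}$ and $\varphi^{\mathcal M}$ are injective and have disjoint images. We may assume this by construction of $F^{\mathsf{SG}}$, since host-indexed fluents and global ET fluents are distinct symbols. Then $\alpha$ is a bijection onto configurations over $\mathbf{H}\times P_{\mathbf{H}}$ plus ET propositions. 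By Definition~\ref{def:planning}, $\alpha(I^{\mathsf{SG}})$ is exactly $S^0_N$ together with $\mathsf{at}(t_0)$ and the fixed decisions $\mu(b)$. This is the base case for both directions.

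\textbf{Step lemma.} The core is a one-step correspondence, proved separately for each of the three action families. For an attack action $a_e$, $pre(a_e)\subseteq s$ holds iff every attribute of every node in $\src(e)$ is present in $\alpha(s)$. Since $\pratt{a}{i}\subseteq\bigcup_{n_j\in\src(e)}P_j$, this is precisely the enabling condition of edge $e$. Moreover, $\Gamma(s,a_e)=(s\setminus del(a_e))\cup add(a_e)$ changes only fluents indexed by the destination host $h_i$. Under $\alpha$ this yields $P_i'=(P_i\setminus\postn{a}{i})\cup\postp{a}{i}$ and leaves all other nodes unchanged, which matches the edge semantics. The fault case is verbatim the same with $f$ in place of $a$. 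For a barrier action $a_{b_k}$, applicability is equivalent to $pre(b_k)\subseteq\alpha(s)$. This includes $\mathsf{at}(t_{k-1})$, and $i_e$ where required. The conditional effect fires the $T$ branch iff $\varphi^{\mathcal M}(\mu(b_k))\in s$, i.e.\ iff $\mu(b_k)\in\alpha(s)$. So the successor equals the case-split in Definition~\ref{def:et}, including advancing $\mathsf{at}(t_{k-1})\to\mathsf{at}(t_k)$ and adding $\mathsf{resolved}(b_k)$. Finally, barrier post-conditions may add host-level propositions; these give the exposure $\psi(O)$. We read them through the same mapping, so that they become available as attack preconditions.

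\textbf{Soundness (C1).} Given a valid plan $\pi=\langle a_1,\dots,a_k\rangle$, induct on $j$ with the invariant that $\alpha(\Gamma(I,\langle a_1..a_j\rangle))$ is the \frm\ configuration reached by the corresponding prefix of edges and barrier evaluations. The step lemma gives that each $a_j$ maps back to an enabled edge or barrier evaluation with the same successor. Since $G^{\mathsf{SG}}\subseteq\Gamma(I,\pi)$, the final configuration contains $p_t$ at $h_t$. Hence the image sequence is a valid cross-layer path ending at the target.

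\textbf{Completeness (C2).} The argument is symmetric. Each step of a valid \frm\ path is an attack edge, a fault edge, or a barrier evaluation. Each has exactly one compiled action, by construction of $A^{\mathsf{SG}}$. The step lemma, read right to left, shows that this action is applicable in the current planning state and produces the matching successor. Reaching $(h_t,p_t)$ yields $G^{\mathsf{SG}}$.

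\textbf{Main obstacle.} The delicate step is the barrier case. There are two concerns. First, the conditional effect must be faithfully representable; it is, because $\mu(b_k)$ is a fluent fixed in $I$ and never deleted, so it can be compiled into two STRIPS actions guarded by $\mu$ or its negation without changing reachability. Second, the definition contains a notational slip in $del(a_{b_k})$, which uses $(h_\star,p)$ for the $T$ branch. I would resolve this by reading $\varphi^{\mathcal M}$ as applied uniformly to $post^{-T}(b_k)$, and by stipulating that host-level barrier effects are routed through $\gamma^{\mathcal M}$. Otherwise, exposure facts would not line up with attack preconditions. The other point to state explicitly is source-set consistency: the precondition of $a_e$ is the entire attribute set of $\src(e)$, not just $pre(a)$. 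So the correspondence holds for edges exactly as defined in $\mathbf{E_A}$ and $\mathbf{E_F}$, which is all that C1 and C2 claim.
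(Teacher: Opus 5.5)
Your route is the same as the paper's. Both use a one-to-one correspondence between SG edges or barrier evaluations and the three action families, pushed through plans and paths one step at a time; your map $\alpha$ and explicit inductions make that sketch precise. As you have set it up, however, the barrier case of your step lemma fails.

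You assume $\gamma^{\mathcal M}$ and $\varphi^{\mathcal M}$ have disjoint images. But in Definition~\ref{def:planning} a fault action adds $\gamma^{\mathcal M}(\{h_j\}\times post^+(f))$, so the initiating event a fault produces is the host-indexed fluent $\gamma^{\mathcal M}((h_j,i_e))$. The barrier action $a_{b_k}$ instead tests $\varphi^{\mathcal M}(pre(b_k))$, i.e.\ the global fluent $\varphi^{\mathcal M}(i_e)$. If $i_e\in pre(b_k)$, these never coincide under your assumption. Then in C2 the step lemma ``read right to left'' does not make the barrier action applicable. Completeness fails at the first barrier of every fault-triggered event tree, which is exactly the FT$\to$ET link the AG$\to$FT$\to$ET$\to$AG$\to$FT paths need. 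This is the mirror image of the ET$\to$AG mismatch you correctly spotted, but you repaired only that direction.

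Keeping $i_e$ out of $pre(b_k)$ (your ``where required'') does not rescue it. Definition~\ref{def:et} evaluates $b_i$ only in states with $i_e\in s$. Since $\mathsf{at}(t_0)\in I^{\mathsf{SG}}$, a plan could then fire $a_{b_1},a_{b_2},\dots$ before any fault has occurred, which is not an ET execution. Your claimed equivalence between applicability of $a_{b_k}$ and the SG enabling condition is then false, and C1 breaks instead.

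The repair has three parts:
\begin{itemize}
\item Use a single encoding of the shared proposition space for the preconditions and effects of all three families. A proposition written by one layer must then be literally the fluent read by another; this covers both $i_e$ written by faults and exposure facts written by barriers.
\item Define $\alpha$ through that single encoding.
\item Normalize every $pre(b_k)$ to contain $i_e$. It suffices to put it in $pre(b_1)$, provided $i_e$ is never deleted and $\mathsf{at}(t_{k-1})$ is added only by $a_{b_{k-1}}$.
\end{itemize}
With that in place, your step lemma holds family by family and both inductions go through as you wrote them. The paper's sketch tacitly relies on the same uniform encoding when it says each action ``faithfully encodes'' its step. Making the encoding explicit, as you do, shows that it must be imposed at both cross-layer boundaries, not only at ET$\to$AG.
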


\begin{proof}[Proof Sketch]
The proof follows from the observation that the compilation in Definition~\ref{def:planning} establishes a one-to-one correspondence between edges in \frm\ and actions in $\mathcal{M}^{SG}$.
By construction, $A^{SG}$ contains exactly three families of actions. (i)~For each attack edge $e \in E_A$, the corresponding action $a_e$ has preconditions 
$\gamma^\mathcal{M}(\bigcup_{(h,P) \in src(e)} \{(h,p) \mid p \in P\})$ and effects $\gamma^\mathcal{M}(\{h_i\} \times post^+(a))$ and $\gamma^\mathcal{M}(\{h_i\} \times post^-(a))$, mirroring exactly the precondition and postcondition structure of the attack tuple $a$ inducing $e$. (ii)~For each fault edge $e \in E_F$, the corresponding action $a_e$ is defined analogously from the fault tuple $f$ inducing $e$. (iii)~For each barrier $b_k$ in each event tree $E_{ie}$, the corresponding action $a_{b_k}$ requires 
the stage proposition $\varphi^\mathcal{M}(at(t_{k-1}))$ and produces $\varphi^\mathcal{M}(at(t_k))$, advancing the ET execution by exactly one step and applying the True or False branch postconditions determined by 
the fixed truth value of $\varphi^\mathcal{M}(\mu(b_k))$ in the initial state $I^{SG}$.

\textbf{Soundness (C1).} Let $\pi = \langle a_1, \ldots, a_n \rangle$ be a valid plan in $\mathcal{M}^{SG}$, so that $\Gamma_{\mathcal{M}^{SG}}(I^{SG}, \pi) \supseteq G^{SG}$. Each action $a_i \in \pi$ belongs to exactly one of the three families above. An action is executable at a state $s$ only if its preconditions are satisfied in $s$; by the correspondence above, this means the source node-set of the equivalent edge in $G^{SG}$ or $F^{SG}$ is present in the current propositional state, or the required ET stage and enabling propositions hold for the barrier action. Executing $a_i$ updates the state by exactly the add and delete effects of the corresponding edge or barrier evaluation. Therefore, $\pi$ 
induces a valid sequence of attack edges, fault edges, and barrier evaluations in \frm\ that reaches a state containing $\gamma^\mathcal{M}(h_t, p_t)$, satisfying the target host-attribute $(h_t, p_t)$.

\textbf{Completeness (C2).} Let $\rho$ be a valid cross-layer path in \frm\ from $S^0_N$ to $(h_t, p_t)$, comprising a sequence of attack edges, fault edges, and barrier evaluations. By the action-step correspondence, each edge or barrier evaluation in $\rho$ maps to a unique action in $A^{SG}$ whose preconditions and effects faithfully encode the enabling conditions and postconditions of that step. The initial state $I^{SG}$ encodes the propositions of $S^0_N$ together with the ET initialization facts $\varphi^\mathcal{M}(at(t_0))$ and the fixed decision propositions $\varphi^\mathcal{M}(\mu(b))$ for all barriers. Since $\rho$ is valid in \frm, the enabling conditions of each step hold when that step is taken; the corresponding action is therefore executable at the matching planning state. Applying all actions in order yields a final state containing $\gamma^\mathcal{M}(h_t, p_t) = G^{SG}$, so the induced action sequence is a valid plan in $\mathcal{M}^{SG}$.
% \textbf{ET determinism.} Note that barrier evaluation actions do not introduce nondeterminism into the planning model. The truth value of each decision proposition $\mu(b_k)$ is fixed in $I^{RG}$, so each barrier action has a single deterministic outcome. This ensures that the compiled model belongs to the class of positive-precondition STRIPS planning problems, for which plan existence is PSPACE-complete~\cite{bylander1994}. The RG++ cross-layer path existence problem is therefore 
% in PSPACE; membership follows from Theorem~\ref{thm:soundness-completeness} and the PSPACE-completeness of plan existence in this class~\cite{bylander1994}.
\end{proof}

This equivalence also allows us to prove the complexity of finding a cascade path through \frmtitle\ .

\begin{theorem}
For a general Safety Graph, the problem of determining the existence of a cascade path through the \frmtitle\ is PSPACE-complete.
\end{theorem}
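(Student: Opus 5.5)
We prove the two directions separately, using Theorem~\ref{thm:soundness-completeness} to move between cascade paths in \frm\ and plans in $\mathcal{M}^{SG}$. Throughout, the input size is the size of the explicit description: the hosts $\mathbf{H}$, the proposition set $P_{\mathbf{H}}$, the attack and fault tuples, and the event trees. The number of reachable states may still be exponential.

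\textbf{Membership in PSPACE.} By Theorem~\ref{thm:soundness-completeness}, a cascade path from $S_N^0$ to $(h_t,p_t)$ exists if and only if $\mathcal{M}^{SG}$ has a valid plan. We first check that the compilation in Definition~\ref{def:planning} is polynomial. There is one action per attack or fault edge, and there is one action per barrier. Each action's preconditions and effects are subsets of $F^{SG}$, and $|F^{SG}|$ is at most $|\mathbf{H}|\cdot|P_{\mathbf{H}}|$ plus the polynomially many ET fluents.

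The barrier actions have conditional effects. Because the truth value of $\mu(b_k)$ is fixed in $I^{SG}$ and never changed by any action, each barrier action can be resolved at compile time into a plain STRIPS action with a single branch. Alternatively, we split it into two STRIPS actions, one guarded by $\mu(b_k)$ and one by its negation. The negation can be encoded as a complementary fluent fixed in the initial state.

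The result is a propositional STRIPS instance of polynomial size. Plan existence for such instances is in PSPACE by Bylander's result: nondeterministically guess the plan one action at a time, storing only the current state (polynomially many bits) and a counter up to $2^{|F^{SG}|}$, then apply Savitch's theorem.

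\textbf{PSPACE-hardness.} We reduce from plan existence for propositional STRIPS, which Bylander showed is PSPACE-complete even with positive preconditions and goals. Given an instance $\langle F,A,I,G\rangle$, we build an \frm\ instance as follows:
\begin{itemize}
\item Take a single host $h$ and let $P_{\mathbf{H}}=F\cup\{g\}$, where $g$ is a fresh goal proposition.
\item Turn every STRIPS action $a$ into an attack tuple $\langle pre(a),add(a),del(a)\rangle$ on host $h$.
\item Add one extra attack with precondition $G$ and postcondition $\{g\}$, so the goal becomes a single target attribute $(h,g)$.
\item Use empty fault trees and event trees.
\item Let $S_N^0=\{(h,I)\}$.
\end{itemize}
Under Definition~\ref{def:planning}, $\mathcal{M}^{SG}$ is then literally the original STRIPS instance plus the goal action. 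So a plan exists in the original instance iff a cascade path to $(h,g)$ exists, and the construction is clearly polynomial. Hardness therefore already holds for the AG layer alone.

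\textbf{Main obstacle.} The delicate step is making sure the edge-based definitions of $\mathbb{G}$ match the intended semantics. Nodes are pairs $(h,P)$, and edges exist for every attack applicable to some source node-set. Listing all such nodes explicitly would be exponential. So the general \frm\ must be taken as given implicitly by its tuples (hosts, propositions, attack/fault/barrier specifications), with edges and nodes induced on demand. Under that reading both the reduction and the membership argument are polynomial. I would state this input convention explicitly. I would also note that if the graph were given explicitly, path existence would drop to NL, so the implicit representation is essential for the claim.
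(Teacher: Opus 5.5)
Your proposal is correct and follows the paper's proof. Membership goes via Theorem~\ref{thm:soundness-completeness} plus PSPACE membership of classical plan existence, and hardness encodes an arbitrary STRIPS instance as attacks on a single host with one extra attack collapsing the goal into a single target attribute; the paper keeps a trivial fault node and one-barrier event tree where you use empty layers, and your size and fixed-$\mu$ checks are useful rigor the sketch omits. One caveat on your closing remark: Definition~\ref{def:planning} uses the source nodes' attributes only as positive preconditions, so an explicit attack graph with just the nodes $(h,pre(a))$ and $(h,(pre(a)\setminus del(a))\cup add(a))$ per action is already polynomial in size and has the same reachable states as the STRIPS instance. The implicit-input convention is therefore not essential, and your NL claim holds only if nodes are read as exact host states.
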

\begin{proof}[Proof Sketch]
The membership proof, i.e., the problem is within the class of PSPACE, is given by Theorem 1, as a planning problem of the form discussed here is PSPACE-complete \cite{nebel2000compilability}. The model described above belongs to that class, proving the membership. 

Regarding hardness, we can show that for any planning problem, we can reduce it to a problem of identifying a path for the Safety Graph in polynomial time. To do so, 
we consider a SG with a single host, a single fault tree node (eliminating any corresponding fault edges and fault actions), and a corresponding single event tree $\mathbb{E}_{i_e}$ with a trivial single barrier (containing no conditional effects). 
For this planning problem, given by the tuple $\langle F^{SG},A^{SG},I^{SG},G^{SG}\rangle$, the fluents $F$ map to the attribute set $P_{\mathbf{H}}$ of the SG, the actions $A^{SG}$ map to the attack edges in the attack graph, the initial state $I^{SG}$ maps to the facts from $P_{\mathbf{H}}$ true in the initial node $S_N^0$, and the goal $G^{SG}$ maps to the node in the attack graph which initiates the fault in the single fault tree node. If the corresponding planning problem has a goal containing multiple propositions, a new attack is added with the preconditions being the original goals, and the postcondition is a single new attribute that is the target of the SG.
%With this, there is only a path in the SG if there is a corresponding plan in the original planning problem.
Thus, the constructed \frm\ instance has a cascade path if and only if the original planning problem has a valid plan.
\end{proof}

% ------------------------
% \section{Complexity Analysis}
% \input{41-Complexity-Proof}
% ------------------------
\section{Analysis of \frm\ for Running Example}
\label{sec:analysis}

In this section we demonstrate that \frm\ can discover cross-layer cascade paths that are invisible to models covering only the cyber and physical layers. %Critically, the defender does not know in advance whether any such path exists, nor whether a safety response is the enabling mechanism — both are determined by the planner. 
We instantiate \frm\ on the TEP test-bench, compile it to a planning problem, and pose a question, ``\textit{does a feasible path from an IT foothold to reactor rupture (fault) exist, and if so, what enables it?}'' We answer this in two phases. First, we establish that the analysis of the critical path is unreachable under AG+FT alone. We also run the same planning compilation in RG and ERG \cite{bashir2024resiliency,ray2023explainable}. Second, we show that integrating the safety measures (ET) causes the planner to discover a feasible cascade, and that the returned plan identifies a safety barrier as the precise enabling mechanism. Figure~\ref{fig:workflow} shows the workflow of generation of \frm\ instance and analysis performed using planner.

%We answer this question in two phases. First, we analyze the system using only the AG and FT layers — the standard cyber-physical model — and show the planner cannot reach the critical failure. Second, we integrate the ET layer into the full SG and show the planner discovers a feasible cascade path that traverses AG → FT → ET → AG → FT. The difference between these two results is the paper's central empirical claim.

\subsection{Constructing the SG Instance}
\label{subsec:testbench}

\begin{figure*}[!h]
    \centering
    \includegraphics[width=.7\linewidth]{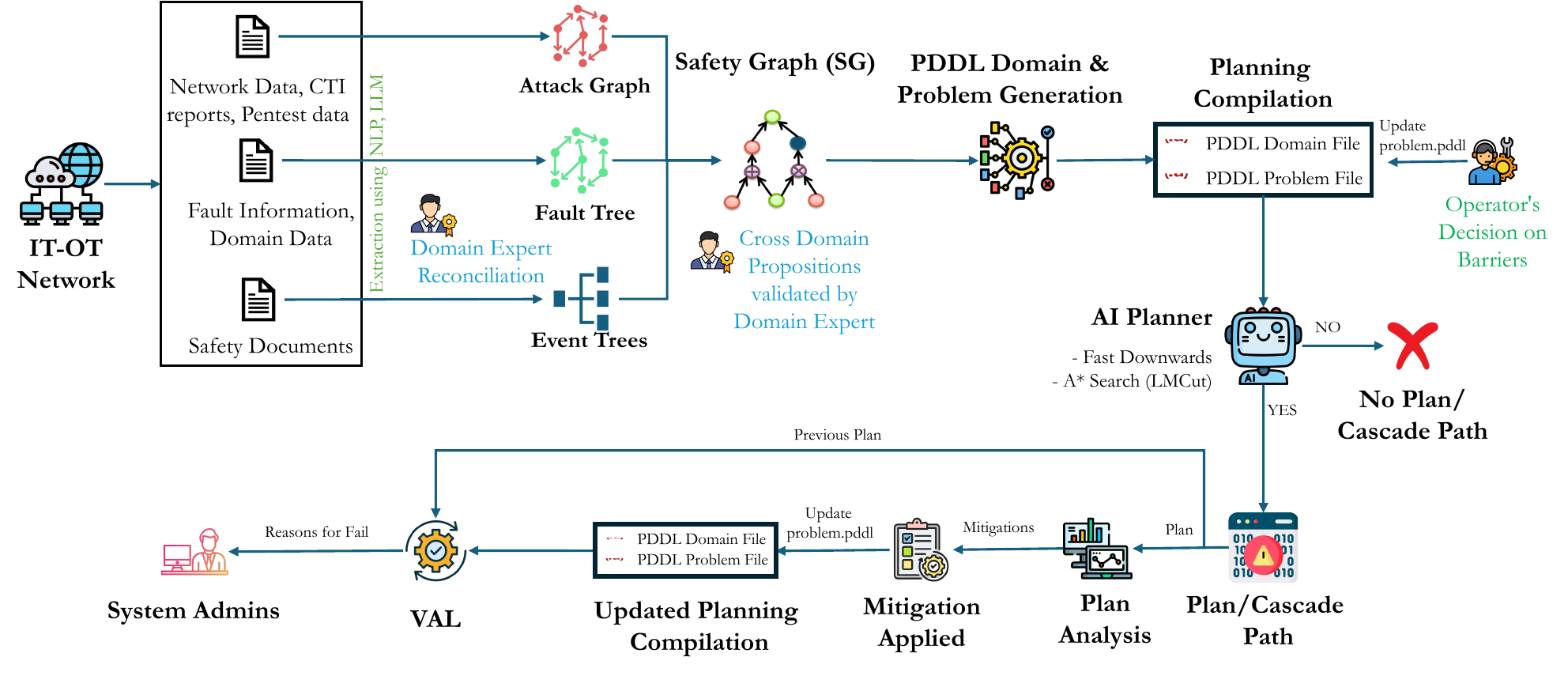}
    \caption{Workflow for modeling \frm\ instance and analyzing \frm\ using AI planner.}
    \label{fig:workflow}
\end{figure*}

%The AG captures the reachable cyber attack surface from an external attacker positioned on the IT subnet. 
For the given network (see $\S$~\ref{sec:runnning}), we collected: (i) network configuration artifacts (host inventory, subnet segmentation, firewall rules), (ii) penetration testing results (open ports, exploitable services, reachable ICS endpoints) \cite{formby2018lowering}, and (iii) CTI reports identifying known vulnerabilities in the deployed software stack to construct the corresponding AG. FT is constructed based on fault analysis report on TEP \cite{zhao2017improved, chiang2001tennessee, formby2018lowering} as a host-scoped fault-propagation structure using Relyence \cite{relyence-fault-tree}. The ET layer is derived from safety manuals following IEC 61511 procedures. For the TEP instantiation, 32 faults were modeled, each with a corresponding event tree of 5--6 barriers, yielding approximately 190 barriers in total — all grounded in certified safety documentation. For the interested reader, the details of the AG, FT and ET for an initiating event `\texttt{pressure-sensor-failure}' are shown respectively in 
Appendix~\ref{app:ag-actions} Figure \ref{fig:ag}, Appendix~\ref{app:ft-actions}, Figures \ref{fig:ft1} and \ref{fig:ft2}, and  
Appendix~\ref{app:et-actions} Figure \ref{fig:et}. We use \textit{llama3:8b} based NER (Named Entity Recognition) methods to extract data from various reports. All the LLM extracted propositions are validated by domain experts.

Given the instantiated AG, FT, and ET, we use automated scripts that generates the AI planner's \texttt{domain.pddl} and \texttt{problem.pddl} files for the corresponding \frm\ instance.
The planning compilation maps host, fault, ET stage/barrier, and outcome propositions into
planning fluents, and maps attack steps, fault transitions, and barrier evaluations into planning actions. The operator's choice to implement barrier or not is captured in the \texttt{problem.pddl} file as barrier's truth value. Once the planning problem is constructed, we use the Fast Downward planner \cite{helmert2006fast} with A* and the LM-cut admissible heuristic to perform analysis. For each variation of operator's decision and goal condition we ran the planner iteratively. Once planner finds a plan or cascade path, we analyze the plan to find the reason why a cascade path exists. Based on the analysis, a mitigation or solution is applied which updates the \texttt{problem.pddl} file. Then the updated compilation and the previous plan are analyzed using VAL \cite{val-planning} to show how the mitigation causes the plan failure.

\subsection{Analysis on the Running Example}
\label{subsec:scenario1}

We first ask whether the critical failure is reachable using only the cyber and physical layers, with no safety measures. This corresponds to the standard cyber-physical threat model used in RG~\cite{bashir2024resiliency} and ERG~\cite{ray2023explainable}. The defender's question is: ``\textit{can an attacker trigger reactor rupture (fault: REX-RUP) failure starting from an IT foothold?}'' The planner finds the partial cascade shown in Figure \ref{fig:fault-plan}. Four attack steps traverse the AG — subnet scan, ARP spoofing, firewall bypass, Modbus injection — and trigger the first fault transition (`\texttt{fault-reactor-pressure-decrease}'). However, when the goal is strengthened to the critical top-level fault (reactor rupture), the planner reports no plan exists. RG or ERG can not find this plan because they consider fault as terminal events and safety measures are given as remediation suggestions.
The FT transitions required to escalate from `\texttt{fault-reactor-pressure-decrease}' to `\texttt{fault-reactor-rupture}' needs preconditions that are never established in the AG+FT model. %The engineering workstation is architecturally isolated; no attack action in the AG can reach it from the attacker's initial foothold under normal operating conditions. The system therefore appears safe with respect to reactor rupture under this model.

% We first analyze the system using an RG model that includes only the AG and FT layers (no ET).
% The defender's question in this scenario is:
% \emph{Does there exist an attack path that can trigger a fault condition?}
% Accordingly, we set the planning goal to reach a fault proposition of interest in the FT.
% The planner returns a feasible plan as shown in Figure~\ref{fig:fault-plan} that corresponds to an AG$\rightarrow$FT cascade:
% a sequence of cyber actions establishes the enabling conditions of a fault transition, which triggers the fault.

% \begin{figure}[!h]
%     \centering
%     \includegraphics[width=\linewidth]{figure/plan-FT.pdf}
%     \caption{Caption}
%     \label{fig:fault-plan}
% \end{figure}

% \begin{figure}[htbp]
% \centering
% \begin{minipage}{0.95\linewidth}
% \tiny
% \begin{verbatim}
% (scan-network-and-discover-hmi 192-dot-168-dot-90-dot-114 192-dot-168-dot-90-dot-5)
% (discover-plc-via-arp-spoofing 192-dot-168-dot-90-dot-5 192-dot-168-dot-95-dot-2)
% (connect-to-plc-bypassing-firewall 192-dot-168-dot-95-dot-2)
% (exploit-buffer-overflow-vulnerability-to-reset-pressure-sensor 192-dot-168-dot-95-dot-2)
% (fault-reactor-pressure-decrease )
% ; cost = 5 (unit cost)
% \end{verbatim}
% \end{minipage}
% \caption{Initial Attack Sequence/Path.}
% \label{fig:fault-plan}
% \end{figure}

\begin{figure}[htbp]
\centering
% \begin{tcolorbox}[
%     colback=white,
%     colframe=green,
%     boxrule=0.5pt,
%     arc=1mm,
%     width=0.95\linewidth,
%     boxsep=2pt,
%     left=2pt,
%     right=1pt,
%     top=.5pt,
%     bottom=.5pt
% ]
% \footnotesize
\begin{lstlisting}[style=pddlplanstyle]
((*@\bfseries scan-network-and-discover-hmi@*) 192-dot-168-dot-90-dot-114 
    192-dot-168-dot-90-dot-5)
((*@\bfseries discover-plc-via-arp-spoofing@*) 192-dot-168-dot-90-dot-5 
    192-dot-168-dot-95-dot-2)
((*@\bfseries connect-to-plc-bypassing-firewall@*) 192-dot-168-dot-95-dot-2)
((*@\bfseries exploit-buffer-overflow-vulnerability-to-reset-pressure-sensor@*) 
    192-dot-168-dot-95-dot-2)
((*@\bfseries fault-reactor-pressure-decrease@*) )
; cost = 5 (unit cost)
\end{lstlisting}
% \end{tcolorbox}
\caption{A PDDL representation of initial attack plan/path from IT cyber attacks to OT fault.}
\label{fig:fault-plan}
\end{figure}

Next, we integrate safety measures as the ET layer. Now, the defender's question becomes: \textit{``does reactor rupture become reachable once safety-response behavior is modeled''?}
The planner finds a plan that did not exist under AG+FT alone. %The cascade follows the pattern: $\text{AG} \rightarrow \text{FT} \rightarrow \text{ET} \rightarrow \text{AG} \rightarrow \text{FT} \rightarrow \cdots$. 
% The first four steps are identical to previous case. %— the attacker establishes an IT foothold and triggers the pressure fault. 
% The key difference is in what happens afterward. The fault activates the ET for \texttt{pressure-sensor-failure}. The planner evaluates barriers B1 and B2 (alarm visible, operator recognizes), then reaches barrier B3: PLC logic inspection from the engineering workstation. Evaluating B3 enables a new postconditions, which include \texttt{workstation-is-live} — a new host-level proposition that was absent from the previous model. This is the ET $\to$ AG transition where the safety procedure itself creates new cyber exposure.
% With the workstation now reachable, the planner executes two attack steps that are previously impossible, \textit{workstation discovery} and \textit{remote access}. The attacker uploads a malicious PLC control script through port 8080 of OpenPLC Editor, corrupting the pressure setpoint logic. Barriers B4–B6 evaluate under the now-compromised control environment, producing outcome O1 (\textit{Normal recovery/pressure recovers safely}). For the OT operator, following the safety procedures eliminates the faults, but they unknowingly also create a new cyber attack chain. The fault tree then escalates through \texttt{fault-relief-pressure-sensor-failure} and \texttt{fault-technical-protection-failure} to the top-level goal/fault, \texttt{fault-reactor-rupture} as shown in Figure \ref{fig:fault-plan-2}.
The first four steps are identical to the AG+FT case. The key difference is what happens after the initial pressure-related fault is triggered. If a fault is observed, operator typically follows a safety protocol. For this case, the fault, `\texttt{pressure-sensor-failure}', act as initial state ($i$) to activate safety sequences which presented as the event tree as shown in Figure~\ref{fig:fault-plan-2}. The planner
evaluates B1 and B2 (barrier's of ET), corresponding to alarm visibility and operator recognition, and then reaches B3, where the safety procedure requires PLC logic inspection from the engineering workstation.

This workstation was not previously connected to system's network. It has became live due to safety checks. From the operator's perspective, the `\texttt{pressure-sensor-failure}' response can still appear to reach the normal recovery outcome \(O_1\). However, evaluating B3 adds a new state in \frm\ ,`\texttt{workstation-is-live}'. This proposition was absent in the AG+FT model. But due to safety it now becomes a host-level
precondition for later workstation attack actions. This is the \textit{safety-induced later movement} of the adversary.
With the workstation now reachable, the planner finds a plan where the attacker compromises the workstation and uploads malicious PLC logic through the OpenPLC Editor interface. This corrupts the pressure setpoint logic and establishes a new fault-enabling condition. The later FT chain then escalates through `\texttt{fault-relief-pressure-sensor-failure}' and `\texttt{fault-technical-protection-failure}' to the top-level fault `\texttt{fault-reactor-rupture}' (Figure~\ref{fig:fault-plan-2}).

\begin{figure}[htbp]
\centering
\begin{lstlisting}[style=pddlplanstyle]
((*@\bfseries scan-network-and-discover-hmi@*) 192-dot-168-dot-90-dot-114 
    192-dot-168-dot-90-dot-5)
((*@\bfseries discover-plc-via-arp-spoofing@*) 192-dot-168-dot-90-dot-5 
    192-dot-168-dot-95-dot-2)
((*@\bfseries connect-to-plc-bypassing-firewall@*) 192-dot-168-dot-95-dot-2)
((*@\bfseries exploit-buffer-overflow-vulnerability-to-reset-pressure-sensor@*) 
    192-dot-168-dot-95-dot-2)
((*@\bfseries fault-rex-press-dcr @*))
((*@\bfseries initiate-event-tree-pressure-sensor-failure @*))
((*@\bfseries evaluate-barrier-b1-hmi-alarm-visibility @*))
((*@\bfseries evaluate-barrier-b2-operator-recognition @*))
((*@\tikzmark{sgboxA}\bfseries evaluate-barrier-b3-plc-logic-check-in-workstation @*))
((*@\bfseries discover-engineering-workstation@*) 192-dot-168-dot-95-dot-2 
    192-dot-168-dot-95-dot-5)(*@\tikzmark{sgboxB}@*)
((*@\bfseries gain-access-to-workstation@*) 192-dot-168-dot-95-dot-5)
((*@\bfseries upload-malicious-plc-pressure-setpoint@*) 192-dot-168-dot-95-dot-5)
((*@\bfseries evaluate-barrier-b4-corrective-action @*))
((*@\bfseries evaluate-barrier-b5-auto-protection @*))
((*@\bfseries evaluate-barrier-b6-isolation-prevention @*))
((*@\bfseries determine-safety-outcome @*))
((*@\bfseries fault-rlf-press-fail @*))
((*@\bfseries fault-tech-p-fail @*))
((*@\bfseries fault-rex-rup @*))
; cost = 19 (unit cost)
\end{lstlisting}
\begin{tikzpicture}[remember picture,overlay]
  \draw[red,densely dotted,thick,rounded corners=1mm]
    ($(sgboxA)+(-0.4em,1.1ex)$)
    rectangle
    ($(sgboxA|-sgboxB)+(0.72\linewidth,-0.3ex)$);
\end{tikzpicture}
\caption{Planner-generated SG cascade path for the TEP case. The dotted box
marks the transition where barrier b3 fails and the attacker pivots from the
PLC to the workstation.}
\label{fig:fault-plan-2}
\end{figure}

This analysis shows that \frm\ captures a cascade path that was absent in the prior models (RG and ERG). The missing link is not an additional exploit in the original attack graph. It is the \textit{safety-response-induced} exposure created during barrier B3. Analyzing the planner-generated trace now identifies a concrete hardening recommendation.
\textit{\textbf{The engineering workstation must not be brought online during safety checks without additional network controls or segmentation enforcement}}. \frm\ converts a subtle cross-layer interaction into a specific, actionable finding before a real adversary can exploit it. The mitigation process and operator's choices are discussed in the evaluation (\S \ref{sec:eval}).
%In the evaluation (\S \ref{sec:eval}) we show in detail how mitigation is captured as state proposition and applied, and discuss how operator's choice to implement a barrier effects the analysis and finding a plan or cascade path. %\frm\ has been additionally instantiated on an IEEE 14-bus power grid simulation using Node-Red (ref. Appendix~\ref{app:power-grid}). %, an automotive assembly line simulation using FactoryIO (full analysis of this domain is omitted due to space constraints but confirms that the framework instantiates for other critical infrastructures).

%SG does not merely append an event tree to an existing attack-fault path. Instead, the event tree changes the reachable cyber state space. In the AG+FT model, the precondition `\texttt{engineering-workstation-compromised}' is unreachable because the workstation remains isolated. In SG, barrier B3 introduces `\texttt{workstation-is-live}' proposition, enabling a new AG segment and therefore a later FT escalation. The actionable hardening recommendation is specific: the engineering workstation should not be made reachable during PLC logic inspection unless additional network controls or temporary isolation policies are enforced.

% ------------------------
% \section{Methodology}
% \input{40-Methodology}
% ------------------------
\section{Evaluation}
\label{sec:eval}

We validate the planner-generated \frm\ trace by replaying it step-by-step in the TEP test-bench, verifying that each abstract action — attack step, fault transition, or ET barrier evaluation — is executable in the deployed environment. The results are described in details in Appendix \ref{app:trace}. Validation evidence is drawn from network observations, protocol traces, reachability tests, workstation interaction logs, and process-state changes, compared against the corresponding state transitions in the compiled planning model.  We ran our experiment on Apple M3, 16 core macOS Tahoe, with 24 GB LPDDR5 memory.

\subsection{Operator's Decision \& Mitigation}
\noindent \underline{\textbf{Operator's Decision.}}
In \frm\ model, the outcome of each barrier can be either true or false. An operator has a choice to implement the barrier or not. Our model can work for any choice the operator makes and will update $\mu(b_i)$ of the current state. In \texttt{problem.pddl} the operator's decision for each barrier is represented as \texttt{(barier-b1) (barier-b2) (barier-b3) (barier-b4) (barier-b5) (barier-b6)}. If \texttt{(barier-b1)} is set to true as initial condition by operator (meaning operator decides to implement barrier B1), it will make proposition \texttt{(b1-alarm-visible)} true; if not \texttt{(b1-alarm-visible)} is false. For each decision based on operator, some propositions are true or false. For example if operator decides to perform all safety checks then outcome would be \texttt{(o1-normal-recovery)}. If the operator only evaluated  \texttt{(barier-b1), (barier-b5)} (others are false) it will generate an outcome \texttt{(o4-contamination)}. In the TEP running example, correctly executing the B3 PLC-logic check by operator is what brings the workstation online and opens the new attack surface. If the operator does not check the PLC logic (\texttt{(not (barier-b3))}) then $\mu(b3)$ = false, in which case there would not be any new attack as \texttt{workstation-is-live} proposition is false. %We will add a discussion in Section VI and APPENDIX A, on how operator’s choice to implement barriers  and attacker’s influence varies SG’s outcome. 

\begin{figure}[htbp]
\centering
\begin{lstlisting}[style=pddlplanstyle]
((*@\bfseries scan-network-and-discover-hmi@*) ip-192-dot-168-dot-90-dot-114 
    ip-192-dot-168-dot-90-dot-5)
((*@\bfseries discover-plc-via-arp-spoofing@*) ip-192-dot-168-dot-90-dot-5 
    ip-192-dot-168-dot-95-dot-2)
((*@\bfseries connect-to-plc-bypassing-firewall@*) 
    ip-192-dot-168-dot-95-dot-2)
((*@\bfseries exploit-buffer-overflow-vulnerability-to-reset-pressure-sensor@*) 
    ip-192-dot-168-dot-95-dot-2)
((*@\bfseries fault-rex-press-dcr @*))
((*@\bfseries initiate-event-tree-pressure-sensor-failure @*))
((*@\bfseries evaluate-barrier-b1-hmi-alarm-visibility @*))
((*@\bfseries evaluate-barrier-b2-operator-recognition @*))
((*@\bfseries evaluate-barrier-b3-plc-logic-check-in-workstation @*))
((*@\bfseries evaluate-barrier-b4-corrective-action @*))
((*@\bfseries evaluate-barrier-b5-auto-protection @*))
((*@\bfseries evaluate-barrier-b6-isolation-prevention @*))
((*@\bfseries determine-safety-outcome @*))
; cost = 13 (unit cost)
\end{lstlisting}
\caption{Planner-generated SG cascade path for the TEP case.}
\label{fig:operator-choice}
\end{figure}

As illustrated in Figure \ref{fig:operator-choice}, planner found a plan when operator does not perform PLC logic check (\texttt{(not (barier-b3))}), resulting in a normal recovery --- no new attack surface. This validates that operator decisions can influence the outcome, and \frm\ can model them. For single \texttt{initiate-event-tree-pressure-sensor-failure} with 6 barriers, we have run an iteration over $2^6$ operator's decisions and all possible goal state combinations. For the running case we have $2^5 - 2^{6}$ combinations of operator's choice for each fault and goal state that we run iteratively. 

\noindent \underline{\textbf{Mitigation.}}
After the plan analysis, it is found that bringing workstation live to the network for PLC check exposes workstation's IP to the attacker and a set of attacks could be launched (See steps 11-12 in Appendix \ref{app:trace}). The mitigation solution is isolating workstation for diagnosis or introducing security controls. By doing this it updates the planning compilation. In the previous \texttt{problem.pddl} initial state, the predicates \texttt{(node-exists ip-192-dot-168-dot-95-dot-5)} and \texttt{(connected-to ip-192-dot-168-dot-90-dot-114 ip-192-dot-168-dot-95-dot-5)} were true. Bringing up 192.168.95.5 into the network creates a path for attacker to upload malicious code to trigger \texttt{plc-pressure-setpoint} fault. With the security controls or isolation the workstation now becomes unreachable which results in both predicate to be false. With this update compilation and previous plan as input we ran VAL (validator) to find out the reasoning behind the failure.

\begin{figure}[htbp]
\centering
\begin{lstlisting}[style=valoutputstyle]
Plan failed because of unsatisfied precondition in:
((*@\bfseries discover-engineering-workstation@*) ip-192-dot-168-dot-95-dot-2 
    ip-192-dot-168-dot-95-dot-5)
Plan failed to execute

Plan Repair Advice:
((*@\bfseries discover-engineering-workstation@*) ip-192-dot-168-dot-95-dot-2 
    ip-192-dot-168-dot-95-dot-5) has an unsatisfied
    precondition at time 10
(*@\colorbox{yellow!45}{\ttfamily\scriptsize (Set (\textbf{node-exists} ip-192-dot-168-dot-95-dot-5) to true)}@*)
\end{lstlisting}
\caption{VAL's output for the failed \frm\ plan on the TEP case.}
\label{fig:plan-failure}
\end{figure}

As illustrated in Figure \ref{fig:plan-failure}, the action \texttt{discover- engineering-workstation} is failing because the workstation is not accessible by adversary (\texttt{node-exists ip-192-dot-168-dot-95-dot-5} is false) which stops subsequent cascade in the system.

\subsection{Scalability Analysis}

Although the SG framework is intended as an offline resiliency posture analysis platform, we still assessed the tractability for realistic IT-OT deployments. We evaluate on the  four core parameters of the SG -- attack nodes ($a_n$), fault nodes ($f_n$), event barriers per ET ($e_b$), and AG$\rightarrow$FT$\rightarrow$ET cascade cycles (\textit{seq}). We assume one event tree per fault. On the TEP instance, we compared Fast Downward with LM-cut against \textit{FF}, \textit{LAMA}, and
\textit{blind BFS}. LM-cut finds optimal plans fastest, since landmark extraction recovers the mandatory action sequence induced by the cascade structure. %Full results are available in the \textit{artifacts}\footnote{\textbf{Artifacts}. The test-bench will be open-sourced after the double blinded review for reproducibility. Everything else is available for review at https://anonymous.4open.science/r/Safety-Graph/.}.

The TEP instantiation is $a_n{=}19$, $f_n{=}32$, $e_b{=}6$ per ET ($\approx 190$ total), and $seq{=}2$, and it solves in 0.8 seconds. We sweep each parameter independently up to $a_n{=}200$, $f_n{=}200$, $e_b{=}100$ per ET, and $seq{=}30$. Two hundred attack nodes already exceed a large enterprise network, making the sweep aggressive relative to realistic deployment requirements.

\begin{figure}
    \centering
    \includegraphics[width=0.8\linewidth]{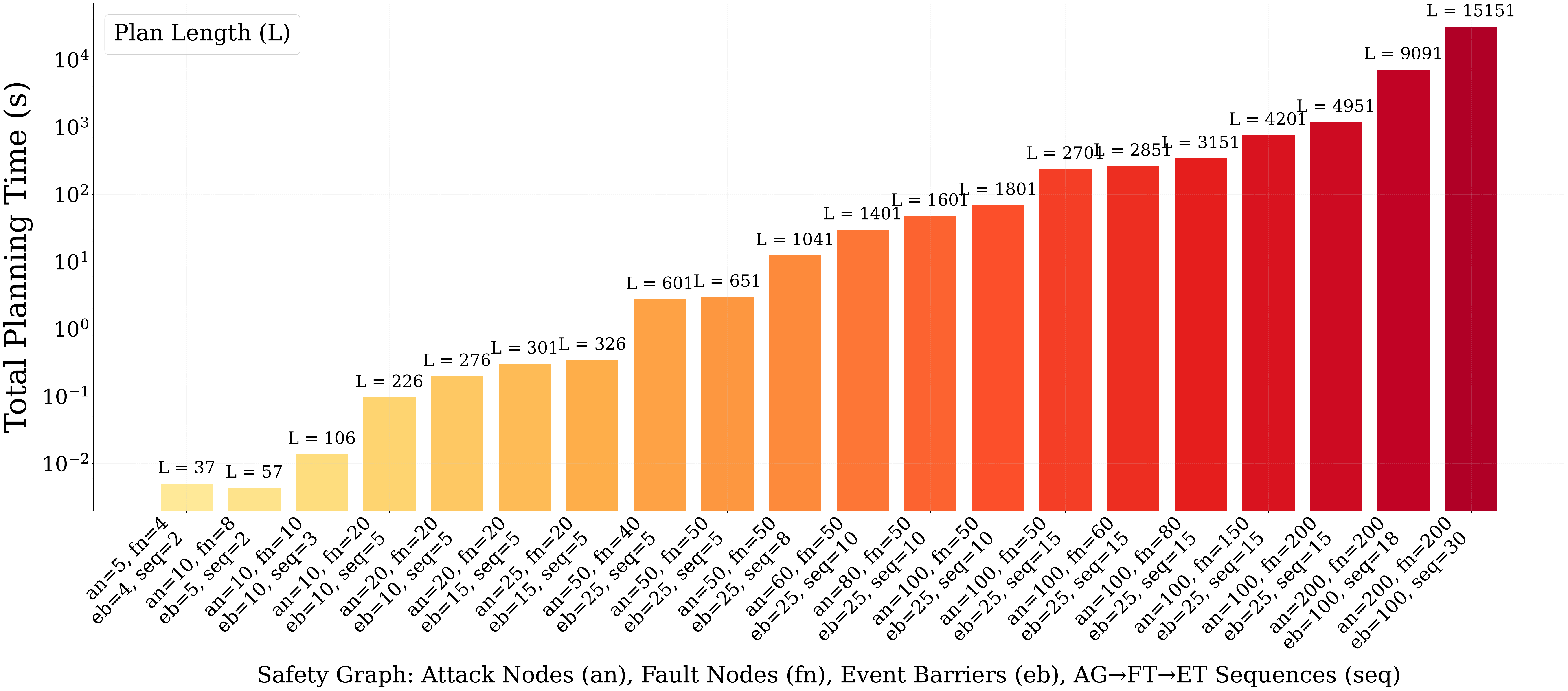}
    \caption{Planner time vs increasing the size of Safety Graph.}
    \label{fig:plan-time}
\end{figure}

The planner solves 192 configurations with planning time ranging from 0.004 to 1{,}592 seconds (Figure~\ref{fig:plan-time}). Peak memory stays at 415~MB across the full parameter range, so the encoding remains compact. Doubling
plan length increases planning time approximately eightfold. \textit{seq} dominates the cost because it multiplies the per-cycle contribution of every other parameter. Attack and fault nodes contribute additively within a cycle and scale near-identically, as the symmetry of plan length predicts. This behavior holds under the current SG assumption of deterministic barrier evaluation.

Configurations at five times the test-bench scale in every parameter solve within minutes. The practical tractability boundary is near $5{,}000$ plan steps under a 30-minute analysis budget. Practitioners should manage \textit{seq} first, as halving it reduces planning time by roughly $8\times$. Doubling $a_n$ or $f_n$ at fixed \textit{seq} costs $3$--$4\times$ and stays within interactive bounds. For very large deployments, decomposing the SG into sub-graphs with bounded \textit{seq} is the recommended mitigation. Reasoning over unbounded cascade cycles is left as future work.

% ------------------------
\section{Conclusion}
In this work, we have presented \frmtitle, a formal framework that unifies cyber attacks as Attack Graphs (AG), faults transitions as Fault Trees (FT), and safety measures as Event Trees (ET) over a shared propositional state space to capture the full bidirectional cascade between IT and OT layers in cyber-physical systems. 
%The central contribution of SG is modeling safety responses — prescribed, deterministic procedures that follow OT faults — as active participants in the attack surface rather than terminal events. 
%This OT→IT direction, where a physical fault triggers a safety response that unintentionally exposes new cyber preconditions enabling subsequent attacks, is absent from all prior frameworks.
We defined a sound and complete compilation of \frm\ to classical AI planning, reducing cross-layer cascade path discovery to a PSPACE-complete planning problem solvable automatically by a classical planner. The compilation preserves the full semantics of AG attack steps, FT fault transitions, and ET barrier evaluations as planning actions over a unified fluent space. On a simulated Tennessee Eastman Process test-bench, and
%instantiated from 32 domain expert-specified fault trees, approximately 190 barriers derived from certified IEC 61511 safety documentation, and a grounded attack graph from penetration testing evidence —
the planner is able to discover a critical cascade path traversing IT $\leftrightarrow$ OT. %Every step of the planner-generated trace was validated in the live test-bench along with 
SG produces hardening recommendations for cascade and verifies them by re-running the planner after applying recommended security control. We have also evaluated how operator's choices influence the outcome. Scalability analysis confirms tractability across configurations substantially larger than realistic ICS deployments. %, with planning time following a near-cubic relationship in plan length and memory remaining compact at 415 MB across all tested instances.
%In the future, we plan to extend \frm\ to nondeterministic barrier evaluation — where operator behavior may deviate from prescribed procedures — which would require conformant or probabilistic planning. 
Automated semantic alignment of propositions across AG, FT, and ET layers, currently requiring domain expert reconciliation, is a natural target for future task. Incremental re-analysis as the vulnerability surface evolves and multi-domain validation in other critical infrastructures are also directions for future work.
\label{sec:con}
\section*{Acknowledgment}
This work is partially supported by the U.S. Office of
Naval Research (ONR) under award N000142612041. The opinions, results, conclusions, and suggestions presented in this
work are those of the authors and do not necessarily represent those of the ONR or federal organizations and agencies.

% The authors would like to thank...

% trigger a \newpage just before the given reference
% number - used to balance the columns on the last page
% adjust value as needed - may need to be readjusted if
% the document is modified later
%\IEEEtriggeratref{8}
% The "triggered" command can be changed if desired:
%\IEEEtriggercmd{\enlargethispage{-5in}}

% references section

% can use a bibliography generated by BibTeX as a .bbl file
% BibTeX documentation can be easily obtained at:
% http://mirror.ctan.org/biblio/bibtex/contrib/doc/
% The IEEEtran BibTeX style support page is at:
% http://www.michaelshell.org/tex/ieeetran/bibtex/
%\bibliographystyle{IEEEtran}
% argument is your BibTeX string definitions and bibliography database(s)
%\bibliography{IEEEabrv,../bib/paper}
%
% <OR> manually copy in the resultant .bbl file
% set second argument of \begin to the number of references
% (used to reserve space for the reference number labels box)
% \begin{thebibliography}{1}

% \bibitem{IEEEhowto:kopka}
% H.~Kopka and P.~W. Daly, \emph{A Guide to \LaTeX}, 3rd~ed.\hskip 1em plus
%   0.5em minus 0.4em\relax Harlow, England: Addison-Wesley, 1999.

% \end{thebibliography}

\bibliographystyle{IEEEtran}
\bibliography{reference}

\appendices

\section{Planner Trace Execution}
\label{app:trace}
The planner-generated trace in Figure \ref{fig:fault-plan-2} contains attack, fault, and event steps. We evaluate these steps as follows:
%While some of these terminal steps are abstracted at the planner level, the test bench demonstrates the essential physical-security consequence required by the event-tree-enabled cyber action, which causes the system to progress beyond the initially reachable fault and into a more severe unsafe state that was not reachable in the AG+FT-only model.
\begin{figure}[!h]
    \centering
    \includegraphics[width=0.9\linewidth]{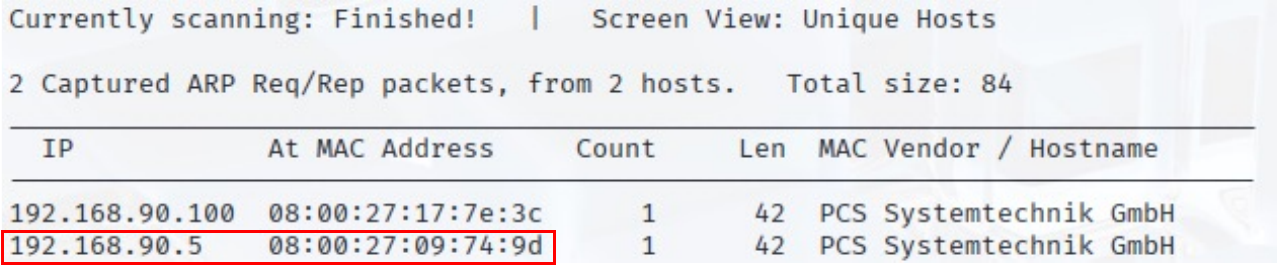}
    \caption{HMI discovery on the IT Subnet.}
    \label{fig:netdiscover}
\end{figure}

\noindent \underline{\textbf{Steps 1--2 (HMI and PLC discovery)}} Running \texttt{netdiscover} on the IT subnet (Figure \ref{fig:netdiscover}) confirms the HMI (\texttt{192.168.90.5}) is reachable from the attacker (\texttt{192.168.90.114}). ARP spoofing against the HMI reveals Modbus traffic to the PLC (\texttt{192.168.95.2}), exposing coil addresses and communication patterns (Figure~\ref{fig:wireshark}).

\begin{figure}
    \centering
    \includegraphics[width=\linewidth]{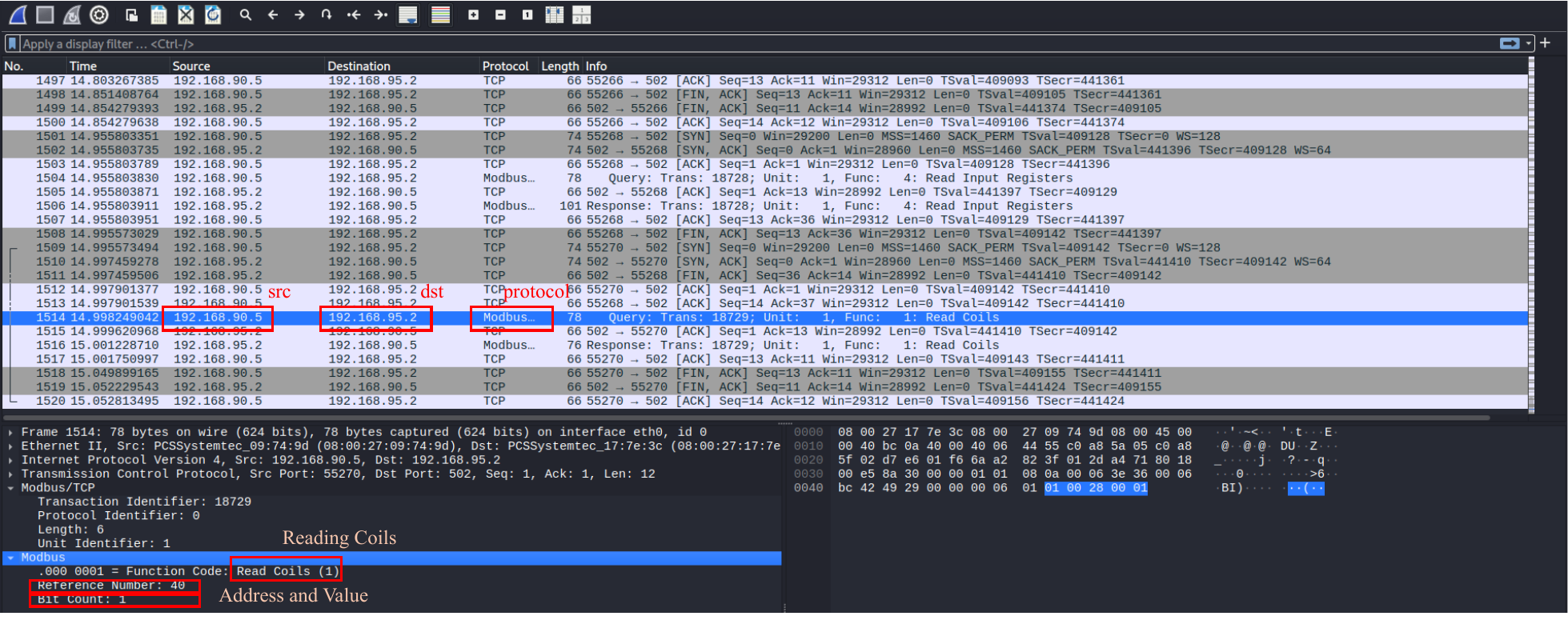}
    \caption{ARP spoofing and packet analysis using wireshark.}
    \label{fig:wireshark}
\end{figure}

\noindent \underline{\textbf{Step 3 (Firewall bypass)}} We use \texttt{Hydra} to crack the HMI password. SSH access reveals the pfSense gateway (\texttt{192.168.90.100}). Adding a static route to \texttt{192.168.95.0/24} via the firewall gives the attacker direct reachability to the PLC, validating the `\texttt{connect-to-plc-bypassing-firewall}' precondition.

\noindent \underline{\textbf{Step 4--5 (Modbus injection and fault initiation)}} An \texttt{nmap} scan identifies the OpenPLC service. A Metasploit Modbus write operation (Figure \ref{fig:metasploit}) resets the pressure coil, causing the main tank pressure to drop continuously (Figure \ref{fig:f1}), confirming the AG$\to$FT transition: `\texttt{fault- reactor-pressure-decrease}' is established in the physical process.

\begin{figure}
    \centering
    \includegraphics[width=0.9\linewidth]{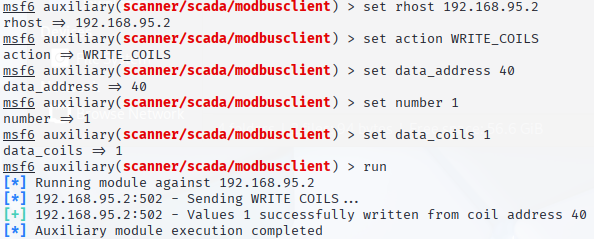}
    \caption{Modbus injection to PLC using Metasploit.}
    \label{fig:metasploit}
\end{figure}
% and observed that the main tank pressure value started to decreased rapidly as illustrated in Fig. \ref{fig:f1}.
\begin{figure}
    \centering
    \includegraphics[width=0.9\linewidth]{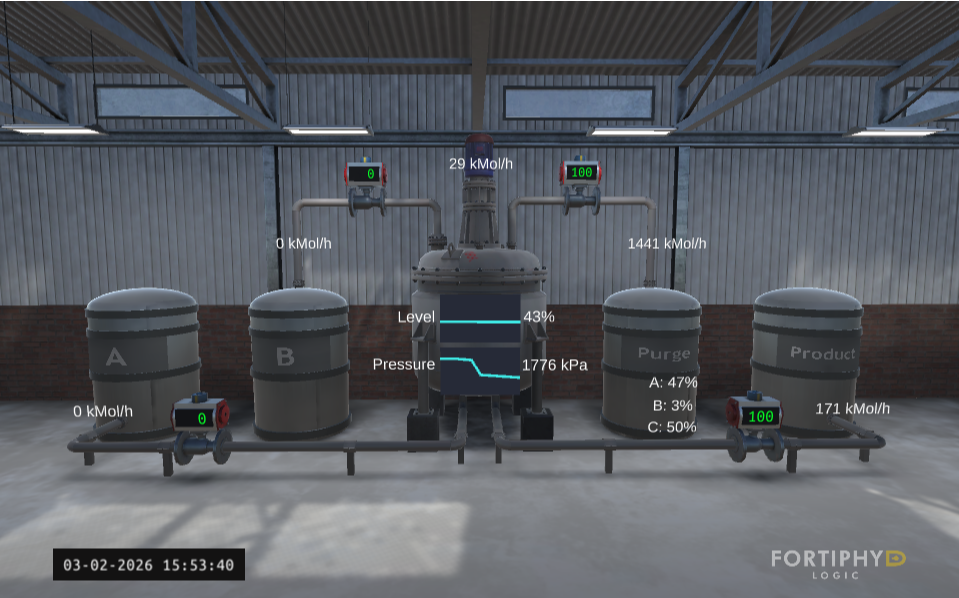}
    \caption{The pressure value of the main tank constantly decreasing.}
    \label{fig:f1}
\end{figure}

\noindent \underline{\textbf{Steps 6--9 (ET initiation and critical exposure)}} The abnormal pressure condition triggers the operator safety workflow. Barriers B1--B2 evaluate normally. At B3, `\texttt{evaluate-barrier-b3-plc-logic-check-in- workstation}', the safety procedure requires PLC logic inspection from the engineering workstation, bringing it online. This is the ET$\to$AG transition. The workstation, previously unreachable, enters the reachable state space exactly as modeled.

\noindent \underline{\textbf{Steps 10--12 (Workstation access and PLC corruption)}} The engineering workstation is located in the OT segment hosting OpenPLC Editor. Remote access is obtained and a malicious ladder logic file (\texttt{maliciousplc.st}) is uploaded via port 8080, corrupting the pressure setpoint. This validate our claim that safety response can create a cyber precondition that enables a previously unreachable attack.

\noindent \underline{\textbf{Steps 13--19 (Fault escalation to REX-RUP)}} %After the control logic is corrupted and even with barriers B4--B6 in place, it fails to yield a safe recovery. The process escalates through \texttt{fault-relief-pressure-sensor- failure} and \texttt{fault-technical-protection-failure} to \texttt{fault- reactor-rupture}, observed as pressure exceeding the threshold in the simulated plant (Figure~\ref{fig:f2}). %This final escalation was unreachable in the AG+FT-only model.
The pressure-sensor-failure event tree completes the operator-visible response, and the operator may interpret the local response as recovery. However, B3 has already exposed the engineering workstation, allowing the attacker to corrupt the PLC pressure setpoint logic. This corrupted logic establishes the new fault-enabling condition used by the subsequent FT transitions. The process then escalates through `\texttt{fault-relief-pressure-sensor-failure}' and `\texttt{fault-technical-protection-failure}' to `\texttt{fault-reactor-rupture}', observed as pressure exceeding the threshold in the simulated plant (Figure~\ref{fig:f2}).

\begin{figure}[!h]
    \centering
    \includegraphics[width=0.9\linewidth]{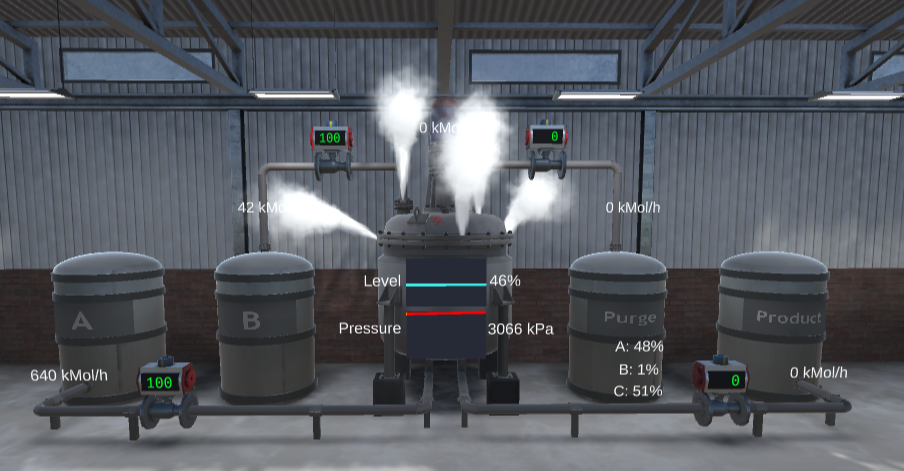}
    \caption{The pressure value of the main tank exceeds the threshold.}
    \label{fig:f2}
\end{figure}
Our evaluation confirms that the planner-generated \frm\ trace is not merely a modeling artifact. %Each major step in the returned plan can be grounded in the implemented environment through direct observation, network reconfiguration, control manipulation, workstation access, and process-state validation. 
Most importantly, the evaluation supports the central claim of \frm\ such that the critical failure path is not available in the previous models, but becomes realizable once safety-response behavior introduces new cyber exposure through the engineering workstation.

\section{Automated PDDL Generation}
\label{app:pddl}
% ============================================================
 
The compilation pipeline described in $\S$\ref{sec:SG} produces two PDDL files from the Safety Graph (SG) of the simulated TEP plant running example,  \texttt{domain.pddl}, which encodes the fluent space $\mathcal{F}^{SG}$ and action families $\mathcal{A}^{SG}$, and \texttt{problem.pddl}, which encodes the initial state $I^{SG}$, the object set, and the goal $G^{SG}$.  Both files are generated automatically by \texttt{pddl\_generator.py} (ref. Artifacts) from the parsed AG, FT, and ET (see Figure~\ref{fig:workflow}). %  This appendix walks through each section of the generated artifacts and identifies where every component of Definition~\ref{def:planning} is realised in concrete PDDL syntax.

\subsection{Attack Graph as Attack Actions}
\label{app:ag-actions}
% ------------------------------------------------------------

\begin{figure}[!h]
    \centering
    \includegraphics[width=0.9\linewidth]{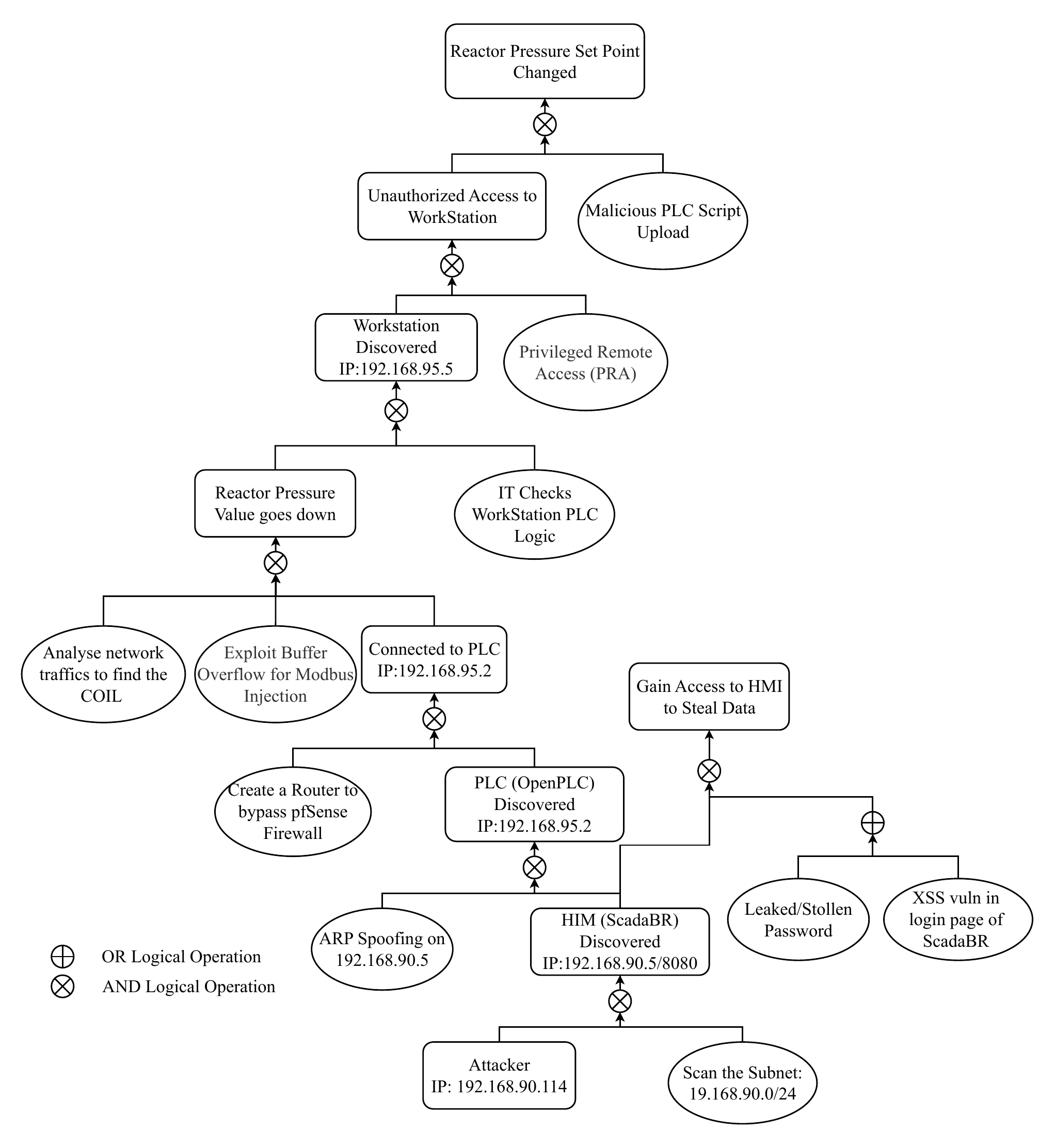}
    \caption{Attack Graph for simulated TEP plant.}
    \label{fig:ag}
\end{figure}

Each attack step in the AG becomes a PDDL action whose parameter signature is inferred from the step name and its proposition set. Actions whose source or destination involves two distinct hosts receive two node parameters \texttt{(?at~?to~-~node)} and actions confined to a single host receive one
parameter \texttt{(?at~-~node)}. Listing~\ref{lst:scan} shows an attack in AG encodes in PDDL.
 
% \paragraph{Two-hop action: \texttt{scan-network-and-discover -hmi}.}
% Listing~\ref{lst:scan} shows the first action of the cascade.
 
\begin{lstlisting}[caption={Attacker moves from initial
  foothold to HMI.},label={lst:scan}]
(:action scan-network-and-discover-hmi
    :parameters (?at ?to - node)
    :precondition (and
        (node-exists ?to)
        (at-node ?at)
        (connected-to ?at ?to)
        (attacker-scan-ip-and-port ?to)
    )
    :effect (and
        (attacker-discover-HMI ?to)
        (at-node ?to)
        (attacker-perform-arp-spoofing ?to)
    )
)
\end{lstlisting}
 
%The precondition set encodes $\mathit{pre}(a_e)$ from Definition~\ref{def:planning}: node existence, current attacker position (\texttt{at-node~?at}), a network edge (\texttt{connected-to~?at~?to}), and the scanning capability on the target.  %Note that \texttt{attacker-scan-ip-and-port} is set on the \emph{destination} node in \texttt{problem.pddl}, encoding the property that this host is scannable rather than that the attacker possesses a scanner; this is a deliberate modeling choice that allows different hosts to be marked as discoverable independently. 
%The effects encode $\mathit{add}(a_e)$: HMI discovery and ARP-spoofing capability are granted on the destination.  Movement is implicit: \texttt{(at-node~?to)} is added without deleting \texttt{(at-node~?at)}, so the attacker can be present at multiple nodes simultaneously once they have traversed an edge.
 % \paragraph{AG$\to$FT cross-layer link: \texttt{exploit-buffer-overflow}.}
Listing~\ref{lst:exploit} shows the attack action that enable the fault actions precondition.
 
\begin{lstlisting}[caption={Attack action establishes the fault action's precondition.},label={lst:exploit}]
(:action exploit-buffer-overflow-vulnerability-to-reset-pressure-sensor
    :parameters (?at - node)
    :precondition (and
        (attacker-connected-PLC ?at)
        (at-node ?at)
        (attacker-analyse-network-traffic-to-find-COIL-number ?at)
        (attacker-exploit-buffer-overflow-for-modbus-injection ?at)
    )
    :effect (and
        (Pressure-Sensor-Failure)
    )
)
\end{lstlisting}
 
% The effect \texttt{(Pressure-Sensor-Failure)} is a \emph{parameterless}
% FT proposition.  It appears immediately as a disjunct in the
% precondition of \texttt{fault-reactor-pressure-decrease} (Section~2 of
% the domain), realising the AG$\to$FT dependency without any explicit
% linking relation: the postcondition of an attack action serves directly
% as the enabling condition of a fault action, as specified in
% Definition~\ref{def:sg} and Theorem~\ref{thm:sound-complete}.
 
% ------------------------------------------------------------
\subsection{Fault Tree as Fault Actions}
\label{app:ft-actions}
% ------------------------------------------------------------

\begin{figure}
    \centering
    \includegraphics[width=0.8\linewidth]{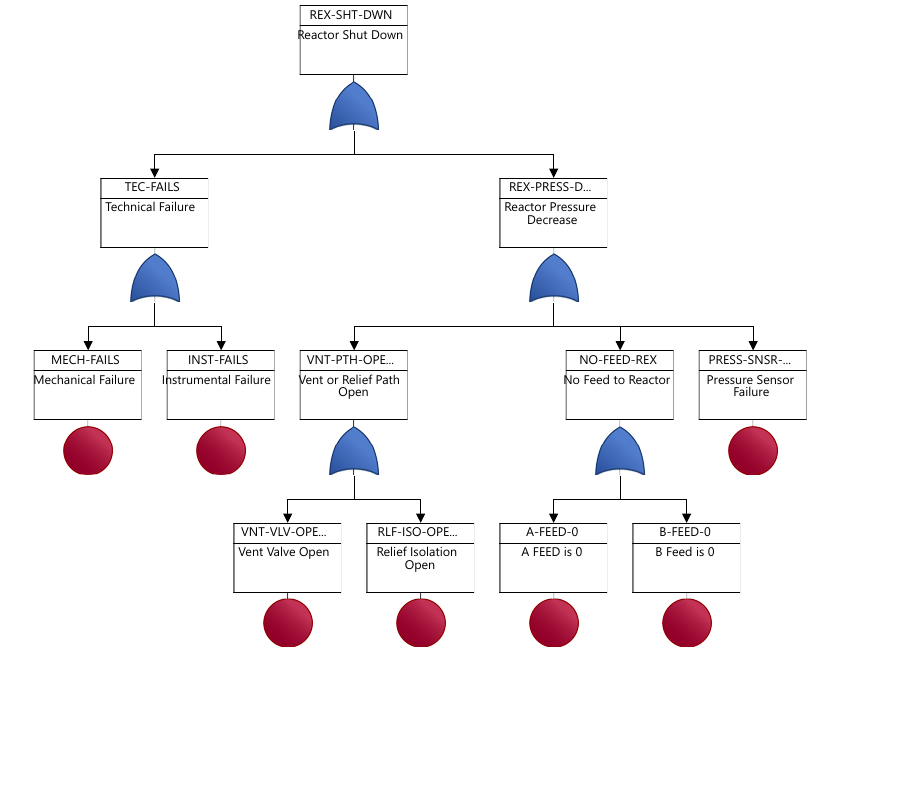}
    \caption{Fault Tree for reactor shut down (REX-SHT-DWN).}
    \label{fig:ft1}
\end{figure}

\begin{figure}
    \centering
    \includegraphics[width=\linewidth]{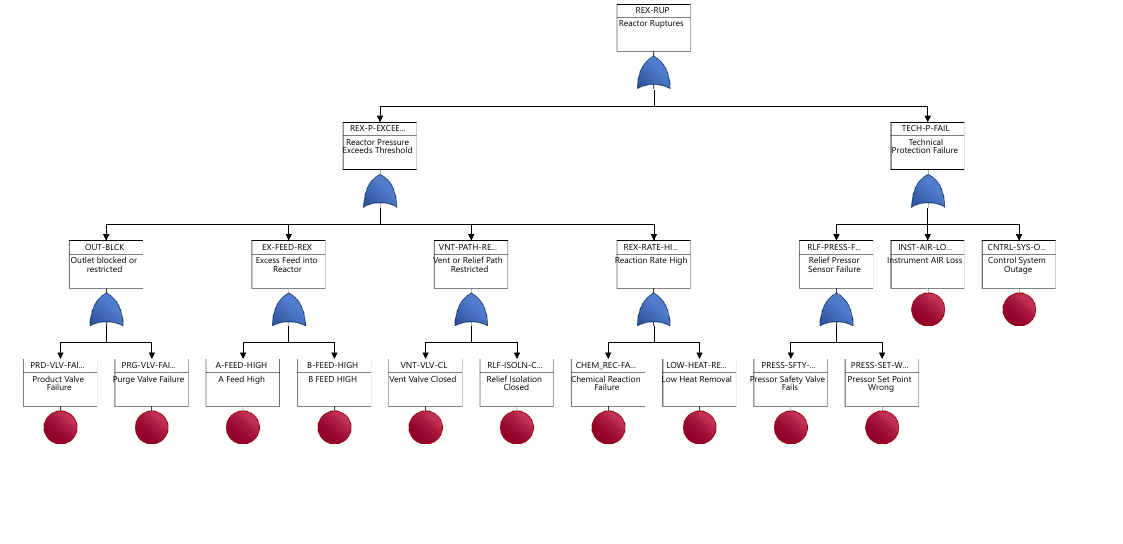}
    \caption{Fault Tree for reactor rupture (REX-RUP).}
    \label{fig:ft2}
\end{figure}

Each fault actions encodes one fault edge of the FT. The gate type (OR or AND) of the corresponding FT node determines the precondition shape.
 % \paragraph{OR-gate fault: \texttt{fault-reactor-pressure-decrease}.}
Listing~\ref{lst:ftpress} shows the a fault action triggered by the cyber attack or attack action.
 
\begin{lstlisting}[caption={FT action: OR gate over three enabling
  conditions, one of which is produced by the AG.},label={lst:ftpress}]
(:action fault-reactor-pressure-decrease
    :parameters ()
    :precondition (or
        (Vent-or-Relief-Path-Open)
        (No-Feed-to-Reactor)
        (Pressure-Sensor-Failure)   ; produced by AG exploit action
    )
    :effect (and
        (Reactor-Pressure-Decrease))
)
\end{lstlisting}
 
The OR precondition directly transcribes the OR gate in Figures~\ref{fig:ft1} and \ref{fig:ft2}.  The disjunct `\texttt{Pressure-Sensor-Failure}' is the AG$\to$FT link in the \texttt{domain.pddl}, where a cyber post-condition from Listing~\ref{lst:exploit} appears here as a physical enabling condition.
% \paragraph{Top-level fault with ET gate: \texttt{fault-reactor-rupture}.}
Listing~\ref{lst:ftrup} shows the fault action's pre- and post-conditions.

 % (:action fault-reactor-rupture
%     :parameters ()
%     :precondition (or
%         (Reactor-Pressure-Exceeds-Threshold)
%         (Technical-Protection-Failure)
%     )
%     :effect (and
%         (Reactor-Ruptures))
% )

\begin{lstlisting}[caption={FT action: top-level fault gated
  on ET completion.},label={lst:ftrup}]
(:action fault-relief-pressure-sensor-failure
    :parameters ()
    :precondition (and
        (event-tree-resolved)
        (or (Pressor-Safety-Valve-Fails) (Pressure-Set-Point-Wrong))
    )
    :effect (and
        (Relief-Pressor-Sensor-Failure))
)
\end{lstlisting}
 
% The precondition of \texttt{fault-relief-pressure-sensor-failure}
% requires \texttt{(event-tree-resolved)} in addition to a physical
% condition.  This is significant: \texttt{Reactor-Ruptures} is only
% reachable via the chain \texttt{fault-reactor-rupture} $\leftarrow$
% \texttt{Technical-Protection-Failure} $\leftarrow$
% \texttt{Relief-Pressor-Sensor-Failure}, whose enabling condition
% includes \texttt{event-tree-resolved}.  Because the planner cannot
% produce this proposition without executing \texttt{determine-safety-outcome}
% (see Section~\ref{app:et-actions}), the second fault cascade is
% structurally gated on ET completion.  This enforces the two-phase
% contrast of Section~\ref{sec:analysis}: without the ET layer,
% \texttt{event-tree-resolved} is never established and
% \texttt{Reactor-Ruptures} is unreachable.
 
% ------------------------------------------------------------
\subsection{Event Tree Actions}
\label{app:et-actions}
% ------------------------------------------------------------
 \begin{figure}[!h]
    \centering
    \includegraphics[width=\linewidth]{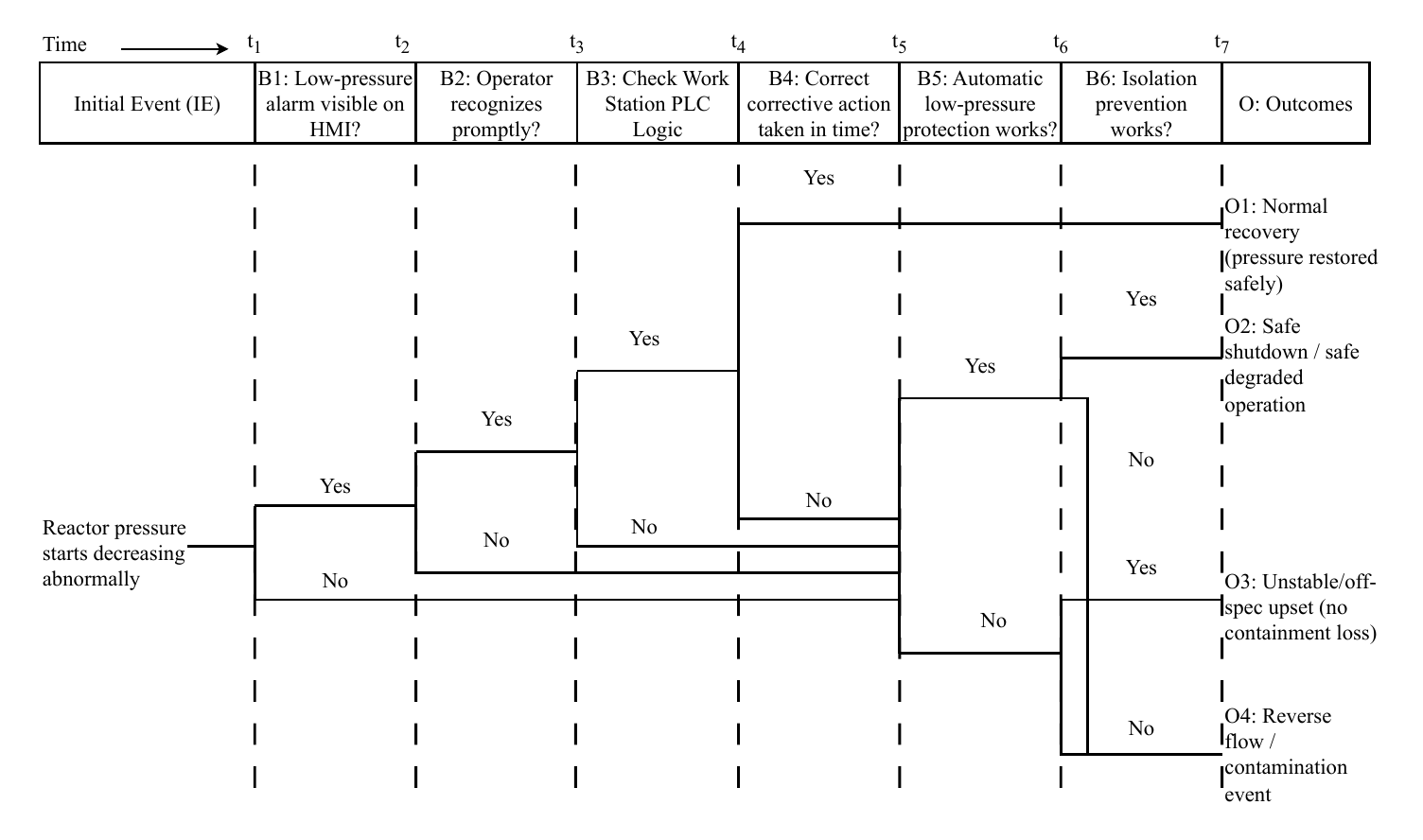}
    \caption{An Event Tree (\(E_{ie}\)) generated for a fault/Initial Event (\(i_e\)) `\texttt{pressure-sensor-failure}'.}
    \label{fig:et}
\end{figure}
% \paragraph{ET initiation.}
% The action \texttt{initiate-event-tree-pressure-sensor-failure}
% requires both \texttt{(Pressure-Sensor-Failure)} (AG postcondition)
% and \texttt{(Reactor-Pressor-Decrease)} (FT postcondition), encoding
% the FT$\to$ET activation dependency: the event tree fires only after
% the physical fault has propagated to the reactor pressure level.
 
% \paragraph{Sequential barrier evaluation.}
Each barrier $b_k$ compiles to a single action with two conditional effect branches.  Listing~\ref{lst:b1} shows B1 as a minimal example of the pattern.
 
\begin{lstlisting}[caption={ET action: B1 conditional effect structure
  encoding the yes/no barrier outcome (Definition~5).},label={lst:b1}]
(:action evaluate-barrier-b1-hmi-alarm-visibility
    :parameters ()
    :precondition (and (current-time-t1) (activate-event-tree))
    :effect (and
        (when (barrier-b1)
            (and (b1-alarm-visible) (b1-resolved)))
        (when (not (barrier-b1))
            (and (b1-resolved)))
        (not (current-time-t1))
        (current-time-t2)
    )
)
\end{lstlisting}
 
% The precondition \texttt{(current-time-t1)} enforces sequential
% evaluation: this action is only executable at time step~$t_1$,
% encoding the requirement $\mathit{at}(t_{k-1}) \in \mathit{pre}(b_k)$
% from Definition~\ref{def:et}.  Both branches unconditionally set
% \texttt{b1-resolved} and advance the clock to \texttt{current-time-t2},
% ensuring each barrier is evaluated exactly once regardless of outcome,
% consistent with the \emph{resolved} marker in Definition~\ref{def:et}.
% The decision proposition \texttt{(barier-b1)} is fixed true in
% \texttt{problem.pddl} (all barriers intact), reflecting the IEC~61511
% assumption that safety responses are prescribed in advance.
 
% \paragraph{ET$\to$AG cross-layer link: barrier B3.}
Listing~\ref{lst:b3} shows the pivotal barrier action.
 
\begin{lstlisting}[caption={ET barrier B3: `\textit{YES}' branch establishes the workstation-check proposition that enables next attack actions.},label={lst:b3}]
(:action evaluate-barrier-b3-plc-logic-check-in-workstation
    :parameters ()
    :precondition (and (current-time-t3) (b2-resolved))
    :effect (and
        (when (barrier-b3) ; YES: check performed
            (and (b3-plc-checked) (b3-resolved)))
        (when (not (barrier-b3)) ; NO: check skipped
            (and (b3-resolved))) 
        (not (current-time-t3))
        (current-time-t4)
    )
)
\end{lstlisting}

Listing~\ref{lst:outcome} shows the final ET action with the outcomes.
 
\begin{lstlisting}[caption={ET action: outcome resolution encodes the
  $\mathit{out}(\omega)$ function as four
  conditional effects.  The unconditional
  \texttt{(event-tree-resolved)} gates the second fault cascade.},
  label={lst:outcome}]
(:action determine-safety-outcome
    :parameters ()
    :precondition (and (current-time-t7) (b6-resolved)
                       (not (o1-normal-recovery))
                       (not (o2-safe-shutdown))
                       (not (o3-upset-no-loss))
                       (not (o4-contamination)))
    :effect (and
        ; O1: Normal recovery
        (when (and (b1-alarm-visible) (b2-operator-recognized)
                   (b3-plc-checked) (b4-corrective-in-time))
              (o1-normal-recovery))
        ; O2: Safe shutdown
        (when (and (not (b4-corrective-in-time))
                   (b5-auto-protection) (b6-isolation-prevention))
              (o2-safe-shutdown))
        ; O4: Contamination
        (when (and (not (b4-corrective-in-time))
                   (not (b6-isolation-prevention)))
              (o4-contamination))
        ; O3: Upset, no loss (catch-all)
        (when (and (not (o1-normal-recovery)) (not (o2-safe-shutdown))
                   (not (o4-contamination)))
              (o3-upset-no-loss))
        (event-tree-resolved)       ; gates second fault cascade
    )
)
\end{lstlisting}
 
Each \texttt{when} clause corresponds to one root-to-leaf path in the event tree of Figure~\ref{fig:et}.  The four conditional effects
jointly realize the outcome function $\mathit{out} : \{T,F\}^6 \to \mathcal{O}$ from Definition~\ref{def:et}.  The unconditional effect \texttt{(event-tree-resolved)} is produced regardless of which outcome branch fires, unlocking the second fault cascade. % in Section 5 of the domain.

\end{document}